\newtheorem{theorem}{Theorem}
\newtheorem{proposition}{Proposition}
\newtheorem{corollary}{Corollary}
\theoremstyle{definition}
\newtheorem{remark}{Remark}
\newtheorem{definition}{Definition}
\title{Embezzlement as a ``Self-Test'' for Infinite Copies of Entangled States}
\author{Li Liu}
\affil{University of Copenhagen}
\date{}
\begin{document}

\maketitle
\begin{abstract}
We investigate the operator-algebraic structure underlying entanglement embezzlement, a phenomenon where a fixed entangled state (the catalyst) can be used to generate arbitrary target entangled states without being consumed. We show that the ability to embezzle a target state $g$ imposes strong internal constraints on the catalyst state $f$: specifically, $f$ must contain infinitely many mutually commuting, locally structured copies of $g$. This property is formalized using C*-algebraic tools and is analogous to a form of self-testing, certifying the presence of infinite copies of $g$ within $f$. Our results clarify the structural requirements for embezzlement and provide new conceptual tools for analyzing state certification in infinite-dimensional quantum systems.
\end{abstract}
\section{Introduction}

Embezzlement of entanglement is a very striking phenomena in quantum information: a special ``catalyst'' state allows Alice and Bob to locally create entanglement seemingly for free, without consuming the catalyst. Since its discovery by van Dam and Hayden~\cite{vandam2003embezzling}, embezzlement has played a role in nonlocal games\cite{coladangelo2020two, ji2020three}, complexity theory\cite{leung2013coherent}, and the study of entanglement as a resource\cite{zanoni2024complete}. The original construction by van Dam and Hayden was formulated in finite dimensions, but more recent developments in infinite-dimensional and operator-algebraic settings have revealed deeper structural phenomena\cite{van2024embezzlement, van2025multipartite}. These are particularly relevant when analyzing the ultimate constraints on quantum correlations.

To understand embezzlement better, a natural question is what structural features a state must have in order to function as an embezzler. In recent work~\cite{van2024embezzlement}, it was shown that any universal embezzler must generate a Type~III$_1$ von Neumann factor. In this work, we show something more concrete about the structure an embezzler must have, in terms of states it must contain rather than the type of von Neumann algebra it must be.

Most of the previous studies, including the original paper by van Dam and Hayden, consider embezzlement in the \textbf{approximate regime}. In this work, we focus instead on \textbf{exact embezzlement of entanglement}, which is more restrictive than the approximate case. All of our results concern exact embezzlement, where the catalyst remains completely unchanged and the target state is obtained exactly as specified using local operations.

We start by altering the embezzlement protocol by removing the separable state $\ket{00}$ from the input, and show that this new no-input embezzlement is equivalent to the original embezzlement protocol. By removing the $\ket{00}$, we perform isomorphisms on the catalyst state instead of unitary operations. It allows us to define isomorphisms that maps the catalyst state to the catalyst tensor infinite copies of the target state without running into issues of orthogonality of Hilbert spaces due to reference vector mismatches if we keep the $\ket{00}$ inputs.

We show that the ability to embezzle a given target state $g$ imposes strong internal constraints on the catalyst state $f$. Specifically, if $f$ can be used to embezzle $g$, then $f$ must contain infinitely many mutually commuting copies of $g$, with $g$'s locality preserved. This is similar to the idea of self-testing. A self-test certifies the presence of a specific entangled state by analyzing correlations produced by local measurements, without assuming anything about the internal workings of the devices. In our case, we use embezzlement to certify the existence of some infinite-copies of state $g$ within the catalyst $f$. However, unlike traditional self-testing, this approach does not rely on any specific set of observables or measurement statistics. We ground our certification based on the ability to perform exact embezzlement mathematically, which in itself does not contribute an experiment that can be performed to verify the procedure. In \cite{leung2013coherent}, a non-local game was proposed to verify the embezzlement of a Bell state. We could potentially use this non-local game as a verification procedure to certify the existence of infinite copies of Bell states within the catalyst. However, it is not clear in general how to certify universal embezzlement protocols through experiments.

Furthermore, in recent paper \cite{liu2025explicit}, an approximate universal embezzlement protocol was proposed using a catalyst state that consists of tensor product of all infinite-copies of entangled states with rational Schmidt coefficients. One criticism about such embezzler is that the state seems unnecessarily large. We show that with the main results from this paper, in order to have an embezzlement protocol that can exactly embezzle a dense set of states with all rational Schmidt coefficients, it is necessary for the catalyst to have the structure proposed in \cite{liu2025explicit}. However, it is not clear if similar result can be extended to the proposed exact universal embezzler for all states, due to the technical challenges in taking the tensor product of uncountably many states simultaneously.

The main contributions are as follows:
\begin{itemize}
\item We establish the equivalence between standard and "no-input" models of exact embezzlement in the infinite-dimensional setting, giving a precise operator-algebraic formulation.
\item We prove that exact embezzlement requires the presence of infinitely many disjoint copies of the target state within the catalyst, and formalize this using C*-algebraic tools.
\item We show that in order for a catalyst to embezzle any states with rational Schmidt coefficients exactly, it must contain tensor product of infinite copies of all those states simultaneously.
\end{itemize}

Beyond exact embezzlement, our results raise new questions about robustness:
is approximate embezzlement also a ``robust self-test'' for infinite copies of
entanglement? Developing such a theory would connect directly to central
themes in nonlocal games and quantum cryptography

\section{No-Input Embezzlement}
We start by showing that to perform embezzlement, the input of separable state $\ket{00}$ is not necessary. Following the proof of equivalences betweenstandard and no-input embezzlement, we describe no-input embezzlement using the C*-algebraic model, and explain why it is more natural than the commuting operator model. As a result, the rest of the paper will adopt the C*-algebraic model instead of the commuting operator model.
\subsection{Equivalence between Standard and No-input Embezzlement}
In entanglement embezzlement~\cite{vandam2003embezzling}, a fixed bipartite state $ \ket{\psi_c} \in \mathcal{H}_A \otimes \mathcal{H}_B $ is used as a catalyst to simulate or extract a target entangled state $ \ket{\psi} \in \mathbb{C}^n \otimes \mathbb{C}^n $, without consuming $ \ket{\psi_c} $. In the standard formulation, this is expressed as:
\[
\ket{\psi_c} \otimes \ket{00} \longmapsto \ket{\psi_c} \otimes \ket{\psi},
\]
where $ \ket{\psi_c} $ is the catalyst state and $ \ket{00} $ is the separable input state.

However, in many infinite-dimensional constructions, we consider an alternative formulation known as \emph{no-input embezzlement}, where the protocol directly maps
\[
\ket{\psi_c} \longmapsto \ket{\psi_c} \otimes \ket{\psi},
\]
without requiring any ancillary input. This variation simplifies many constructions and aligns with operator-algebraic models where explicit tensoring with finite ancillas is avoided. The original van Dam and Hayden construction also uses this no-input version of embezzlement. In \cite{van2024embezzlement}, the authors also showed that the two versions of embezzlements are equivalent to each other. We note that their paper is showing equivalence of embezzlement in the approximate case whereas our result is in the exact case.

We begin by showing that these two notions of embezzlement are equivalent.

\begin{proposition}[Equivalence of Embezzlement Models]\label{prop:equivalence_embezzlement}
Let $ \ket{\psi} \in \mathbb{C}^n \otimes \mathbb{C}^n $ be an entangled state. The following two notions of entanglement embezzlement are equivalent:

\begin{enumerate}
    \item \textbf{Standard Embezzlement:} There exist a catalyst state $ \ket{\psi_c} \in \mathcal{H} $ and local isometries
    \[
    U_A, U_B: \mathcal{H} \otimes \mathbb{C}^n \to \mathcal{H} \otimes \mathbb{C}^n
    \]
    such that
    \[
    (U_A\otimes 1_n)(U_B\otimes 1_n) ( \ket{\psi_c} \otimes \ket{00} ) = \ket{\psi_c} \otimes \ket{\psi}.\footnote{Here, $U_A \otimes 1_n$ and $U_B \otimes 1_n$ denote local isometries where the $1_n$ acts on the other party's single-qubit ancilla. For example, in $U_B \otimes 1_n$, the $1_n$ acts on the first qubit of $\ket{00}$. For notational simplicity, we write $U_B \otimes 1_n$ without specifying the precise ordering of tensor factors.}
    \]

    \item \textbf{No-Input Embezzlement:} There exist a catalyst state $ \ket{\phi_c} \in \mathcal{H} $ and local isometries
    \[
    V_B: \mathcal{H} \to \mathcal{H} \otimes \mathbb{C}^n, \quad V_A: \mathcal{H} \to \mathcal{H} \otimes \mathbb{C}^n,
    \]
    such that
    \[
    (V_A \otimes 1_n)(V_B \ket{\phi_c}) = \ket{\phi_c} \otimes \ket{\psi},
    \]
    and the isometries $ V_A, V_B $ satisfy
    \[
    (V_A \otimes 1_n) V_B = (V_B \otimes 1_n) V_A.\quad \footnote{This equality should be interpreted up to a swap of the two output qubit registers. Formally, the correct relation is $(V_A \otimes 1_n)(V_B) = S \circ (V_B \otimes 1_n)(V_A)$, where $S$ swaps the two qubit outputs. The expression $(V_A \otimes 1_n)V_B$ produces the output in the correct logical order: first qubit from $V_A$, second from $V_B$. For notational simplicity, we omit the swap and write equality without it, trusting that the reader understands the underlying ordering convention.}
    \]
\end{enumerate}

Either form of embezzlement can be converted into the other via explicit local isometries, with the catalyst space extended by ancillary qubits in a product state $ \ket{00}^{\otimes \infty} $ and local regrouping and swapping operations.
\end{proposition}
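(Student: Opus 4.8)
The plan is to prove both implications by a single device: pad the catalyst with an infinite product ancilla $\ket{00}^{\otimes\infty}$ and use, on each party's ancilla register, the \emph{shift} isomorphism $(\mathbb{C}^n)^{\otimes\infty}\otimes\mathbb{C}^n\cong(\mathbb{C}^n)^{\otimes\infty}$ — a genuine unitary that fixes the reference vector $\ket{0}^{\otimes\infty}$. Since $\ket{00}$ is separable we have $\ket{00}^{\otimes\infty}=\ket{0}^{\otimes\infty}_A\otimes\ket{0}^{\otimes\infty}_B$, so all ancilla manipulations are performed locally and no reference-vector mismatch of the kind flagged in the introduction ever arises. Throughout, ``$U_A$'' is understood as $U_A^{\mathrm{Alice}}\otimes\mathrm{id}^{\mathrm{Bob}}$ acting on $\mathcal{H}_A\otimes\mathcal{H}_B$, as in the footnote, and similarly for $U_B,V_A,V_B$.

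For Standard $\Rightarrow$ No-Input: given a standard embezzler $(\ket{\psi_c},U_A,U_B)$, set $\ket{\phi_c}=\ket{\psi_c}\otimes\ket{00}^{\otimes\infty}$, so Alice's space becomes $\mathcal{H}'_A=\mathcal{H}_A\otimes(\mathbb{C}^n)^{\otimes\infty}$. Define $V_A$ as the composition: first split off one $\mathbb{C}^n$ (carrying $\ket{0}$) from the head of the infinite stack via the shift isomorphism, obtaining a register $\mathcal{H}_A\otimes\mathbb{C}^n$ consisting of Alice's half of $\psi_c$ together with a fresh $\ket{0}$; then apply $U_A$ to that register; finally reorder the tensor factors so the $\mathbb{C}^n$-output of $U_A$ sits in the output slot of $V_A\colon\mathcal{H}'_A\to\mathcal{H}'_A\otimes\mathbb{C}^n$. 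Define $V_B$ symmetrically. Tracing the composition, $(V_A\otimes 1_n)(V_B\ket{\phi_c})$ reduces, after cancelling the shift and reordering unitaries, to $(U_A\otimes U_B)(\ket{\psi_c}\otimes\ket{00})$ with the untouched parts of the two ancilla stacks (each still $\ket{0}^{\otimes\infty}$) tagging along; by hypothesis this equals $\ket{\psi_c}\otimes\ket{\psi}$, which after the final reordering is exactly $\ket{\phi_c}\otimes\ket{\psi}$. The relation $(V_A\otimes 1_n)V_B=(V_B\otimes 1_n)V_A$ (up to the output swap of the footnote) is automatic, since $U_A,U_B$ and all the shift and reordering maps act on disjoint tensor factors.

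For No-Input $\Rightarrow$ Standard: given a no-input embezzler $(\ket{\phi_c},V_A,V_B)$, set $\ket{\psi_c}=\ket{\phi_c}\otimes\ket{00}^{\otimes\infty}$. The ancilla $\ket{0}$ supplied by the standard protocol is first absorbed into the head of Alice's infinite stack via the shift isomorphism (the stack ``grows'' by one $\mathbb{C}^n$ while its state stays $\ket{0}^{\otimes\infty}$); then $V_A$ is applied to the $\mathcal{H}_A$-register, producing one new $\mathbb{C}^n$; finally the factors are reordered so that this new $\mathbb{C}^n$ occupies the ancilla slot. This defines a local isometry $U_A\colon\mathcal{H}'_A\otimes\mathbb{C}^n\to\mathcal{H}'_A\otimes\mathbb{C}^n$, and symmetrically $U_B$; the absorb-one/create-one count makes the domain and codomain match. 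As before, $(U_A\otimes U_B)(\ket{\psi_c}\otimes\ket{00})$ collapses to $(V_A\otimes 1_n)(V_B\ket{\phi_c})=\ket{\phi_c}\otimes\ket{\psi}$ tensored with the surviving $\ket{0}^{\otimes\infty}$ on each side, and the final reordering yields $\ket{\psi_c}\otimes\ket{\psi}$.

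The only substantive work is bookkeeping: one must fix once and for all a convention for the ordering of the infinitely many tensor factors on each side, and verify that the shift and reordering unitaries on Alice's and Bob's registers are chosen compatibly, so that the chains of equalities above are identities of vectors rather than identities ``up to isomorphism''. I expect this indexing and reordering discipline — together with checking that each composite map is a well-defined isometry on the intended (infinite-dimensional) Hilbert space — to be the main, though entirely routine, obstacle; by contrast the locality and commutation requirements come for free because every operation used is supported on a single party's factors.
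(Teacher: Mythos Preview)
Your proposal is correct and follows essentially the same approach as the paper's proof: both directions extend the catalyst by $\ket{00}^{\otimes\infty}$, use the local shift isomorphism $(\mathbb{C}^n)^{\otimes\infty}\cong\mathbb{C}^n\otimes(\mathbb{C}^n)^{\otimes\infty}$ to extract or absorb a fresh $\ket{0}$ on each side, apply the given embezzlement map, and then swap the output register into place. The paper's appendix carries out precisely the ``bookkeeping'' you anticipate, introducing explicit pull-out/push-in unitaries $W$, swap operators $S$, and tracking how the padded versions of $U_A,U_B$ (respectively $V_A,V_B$) pass through one another to verify both the embezzlement identity and the commutation relation.
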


The proof of the proposition is rather technical and tedious and we will leave the details in the Appendix. The general idea is the following.

From standard to no-input, starting with $U_A$, $U_B$ and $\ket{\psi_c}$, we define $\ket{\phi_c} = \ket{\psi_c}\otimes \ket{00}^\infty$. Using the isomorphism $\ket{00}^\infty \simeq \ket{00}\otimes\ket{00}^\infty$, we construct our $V_A$ and $V_B $ accordingly so that a $\ket{00}$ is first generated from $\ket{00}^\infty$, $U_A$ and $U_B$ are applied, then finally and a swap is performed at the end to move $\ket\phi$ to the last register.

Similarly, from no-input to standard, we start with $V_A$, $V_B$ and $\ket{\phi_c}$. We define $\ket{\psi_c}$ to be $\ket{\phi_c}\otimes \ket{00}^\infty$. We then make use of the inverse isomorphism $\ket{00}\otimes\ket{00}^\infty \simeq \ket{00}^\infty$ to push the extra $\ket{00}$ back into $\ket{00}^\infty$, apply $V_A$ and $V_B$ then swap the resulting $\ket\phi$ to the last register. 

\subsection{C*-algebraic Formulation of No-Input Embezzlement}

In the rest of the paper, we will replace the standard commuting operator model for no-input embezzlement with the C*-algebraic model. This allows us to talk about no-input embezzlement more naturally. In particular, when we take the limit of the no-input embezzlement protocol to show the self-testing like properties of the catalyst, the C*-algebraic model captures the locality of the system more naturally than  the commuting operator model. In fact, the commuting operator model runs into some major issues which we will describe below. More discussions about the C*-algebraic model can be found in \cite{cleve2022constant,liu2025explicit}.

In the C*-algebraic model of no-input embezzlement, the protocol is described as a pair of *-isomorphisms $\Phi_A, \Phi_B$ acting on C*-algebras $\mathcal A\otimes \mathbb M_2 \to \mathcal A$ and $\mathcal B\otimes \mathbb M_2 \to\mathcal B$, and a state $f \in\mathcal S(\mathcal A\otimes\mathcal B)$ such that the following pullback exists: 
\[ f \circ (\Phi_A \otimes \Phi_B)= f \otimes g, \]
where $ g $ is the target state on $ \mathbb M_2 \otimes \mathbb M_2 $. This formulation allows us to better describe local isomorphisms with the structure of tensor product between them. As shown in \cite{liu2025explicit}, this C*-model naturally encodes exact embezzlement in the infinite-dimensional setting. Moreover, by applying the GNS construction to the state $ f $, this model can be turned into the commuting operator framework, recovering the standard interpretation.

Later on, when we discuss infinite copies of target state certification, the system requires us to map $\mathcal F_A\otimes\mathcal F_B\to\mathcal G_A\otimes\mathcal G_B$ where the catalyst state lies in $\mathcal S(\mathcal F_A\otimes\mathcal F_B)$ and the infinite copies of target state lies in $\mathcal S(\mathcal G_A\otimes\mathcal G_B)$. This map is achived by local *-homomorphisms, 
\[\pi_A:\mathcal G_A\to\mathcal F_A \quad \pi_B:\mathcal G_B\to\mathcal F_B.\]
The overall *-homomorphism $\pi = \pi_A\otimes \pi_B$ has bipartite local structure.

\subsubsection*{Issues with Commuting Operator Model}
If we attempt to describe the above system in the commuting-operator model, we immediately run into difficulties. The starting system is a Hilbert space $\mathcal H_f$ carrying commuting C$^*$-subalgebras $\mathcal F_A,\mathcal F_B\subseteq\mathbb B(\mathcal H_f)$; the certification system is $\mathcal H_g$ with commuting subalgebras $\mathcal G_A,\mathcal G_B\subseteq\mathbb B(\mathcal H_g)$. Let $\ket{\psi_f}\in\mathcal H_f$ be the starting state and $\ket{\psi_g}\in\mathcal H_g$ the intended certified state.

The ``local actions'' are specified in the Heisenberg picture by injective $*$-homomorphisms
\[
\theta_A:\mathcal G_A\hookrightarrow \mathcal F_A,\qquad 
\theta_B:\mathcal G_B\hookrightarrow \mathcal F_B.
\]
Since $\theta_A,\theta_B$ are $*$-monomorphisms, they admit spatial implementations by isometries
\[
V_A:\mathcal H_g\to\mathcal H_f,\qquad V_B:\mathcal H_g\to\mathcal H_f
\]
such that
\[
\theta_A(x)=V_A\,x\,V_A^*\quad(x\in\mathcal G_A),\qquad
\theta_B(y)=V_B\,y\,V_B^*\quad(y\in\mathcal G_B).
\]

If $\mathcal H_f$ and $\mathcal H_g$ had tensor-product decompositions, locality could be encoded at the state level by maps of the form $V_A\otimes V_B$, making the requirement ``Alice then Bob = Bob then Alice'' well-defined on $\ket{\psi_g}$. In the current setting, no such tensor structure is available. Both $V_A$ and $V_B$ map $\mathcal H_g$ into $\mathcal H_f$, so one cannot compose them in either order on vectors in $\mathcal H_g$. Thus, while operator-level locality (i.e., $\theta_A(\mathcal G_A)$ and $\theta_B(\mathcal G_B)$ landing in commuting subalgebras of $\mathbb B(\mathcal H_f)$) is well-posed, the naive vector-level notion of two commuting ``local maps'' $\mathcal H_g\to\mathcal H_f$ is ill-typed without additional structure.

\paragraph{Conclusion.}
When the local transformations are modeled as $*$-monomorphisms 
$\theta_A:\mathcal G_A\hookrightarrow \mathcal F_A$ and 
$\theta_B:\mathcal G_B\hookrightarrow \mathcal F_B$, the natural implementation by isometries 
$V_A,V_B:\mathcal H_g\to\mathcal H_f$ breaks down. At the state level, both $V_A$ and $V_B$ act from $\mathcal H_g$ into $\mathcal H_f$, so their 
compositions do not define a meaningful notion of ``Alice then Bob equals Bob then Alice'' on $\ket{\psi_g}$. 
By contrast, in the Heisenberg picture the situation is coherent: one simply requires the ranges of 
$\theta_A(\mathcal G_A)$ and $\theta_B(\mathcal G_B)$ to commute inside $\mathbb B(\mathcal H_f)$, 
which reduces the locality requirement to the existence of a single injective $*$-homomorphism
\[
\theta_A\otimes\theta_B:\mathcal G_A\otimes \mathcal G_B\;\hookrightarrow\;\mathcal F_A\otimes \mathcal F_B.
\]
Thus, for local maps with domain $\mathcal H_g$ and range $\mathcal H_f$, the C*-algebraic model provides a simple and correct formulation of the problem.
\begin{remark}
    We note that while directly applying the commuting operator model fails in this case, it can be achieved using von Neumann algebras through what Alain Conne's called ``correspondence / bimodules'' \cite[Ch.~5 Appendix~B] {connes1994ng}. It is also explained in \cite{crann2025operator}.
\end{remark}

\section{Embezzlement and Certification of Infinite Target States}
The C*-algebraic framework gives us the tools necessary to describe the ``self-test'' like properties of embezzlement of entanglement. We start by informally describing what it means for an infinite-dimensional state $ f \in \mathcal{S}(\mathcal{B}) $ to contain infinitely many copies of a state $ g \in \mathcal{S}(\mathcal{A}) $ with respect to the algebra of observables. If $ f $ admits $ n $ disjoint collections of observables, each generating the algebra that $ g $ acts on, and these collections are mutually commuting, then we may say that $ f $ contains $ n $ copies of $ g $. If this holds for all $ n \in \mathbb{N} $, we say that $ f $ contains infinitely many copies of $ g $. To claim $f$ contains copies of $g$ that respects its locality structure, we further require that each collection of observables respects the locality across different parties. In other words, each set can be partitioned into two mutually commuting subsets, one contained in Alice's algebra and one in Bob's.

Unless stated otherwise, throughout this paper we take the tensor product between Alice’s and Bob’s C*-algebras to be the maximal tensor product, as this is the standard construction used in the description of non-local quantum systems. In contrast, when forming tensor products between multiple copies of C*-algebras associated with copies of certain states, we use the minimal (spatial) tensor product. This choice ensures that the copies remain disjoint and that the norm reflects their disjointness. We remark that whenever the C*-algebras under consideration are nuclear, the minimal and maximal tensor products coincide. In particular, in the embezzlement of a single $n$-qubit state $g$, the C*-algebra generated by infinitely many copies of $g$ is the CAR algebra, which is nuclear. Consequently, in the context of this paper, the distinction between minimal and maximal tensor products is a technical matter rather than a substantive one.
\subsection{Definition of Infinite State Containment}

\begin{definition}[State Containment.]
\label{def:state_containment}
    Let $ g \in \mathcal S(\mathcal G)$ be a state on a C*-algebra $ \mathcal G $, and let $ f \in \mathcal S(\mathcal F) $ be a state on a C*-algebra $ \mathcal F $. We say that \emph{$f$ contains $g$} (up to unitary conjugations) if there exists a *-monomorphism $\pi : \mathcal G \to\mathcal F$ such that for all $A \in\mathcal G$, 
    \[f\circ\pi(A) = g(A).\]
\end{definition}

This definition requires $f$ to not only be able to reproduce the statistical distribution of $g$, but also to be able to fully mimic the behaviour of $g$ on the entire algebra $\mathcal A$. We note that, however, since conjugation by unitary produces a *-isomorphism, the containment only certifies $g$ up to unitary conjugation. While this might appear to make the definition weaker, because any pure states can be converted to each other by unitary conjugation, this definition will make more sense when we talk about local containment in our next definition. 

\begin{definition}[Local Containment]\label{def:local_containment}
Let $\mathcal F_A, \mathcal F_B, \mathcal G_A, \mathcal G_B$ be C*-algebras. Let $f\in \mathcal S(\mathcal F_A\otimes\mathcal F_B)$ and $g \in\mathcal{S}(\mathcal G_A\otimes\mathcal G_B)$ be states with bipartite structure. We say that $f$ \emph{locally contains} $g$ if there exists *-monomorphism $\pi_A:\mathcal G_A\to\mathcal F_A$ and $\pi_B:\mathcal G_B\to\mathcal F_B$ such that for all $G\in\mathcal G_A\otimes \mathcal G_B$
\[f\circ(\pi_A\otimes \pi_B)(G) = g(G).\]
\end{definition}
\begin{remark}
The fact that the overall monomorphism consists of tensor products of local monomorphism ensures that the local structure of $g$ across $\mathcal G_A$ and $\mathcal G_B$ is preserved in $f$. For example, if $g$ is an entangled state with Schmidt coefficients $\{a_i\}_i$, local homomorphisms will not change these Schmidt coefficients. Therefore, $f$ locally containing $g$ implies that $f$ contains a state with the same Schmidt coefficients as $g$. In this sense, $f$ contains a state that is equivalent to $g$ up to local unitary conjugations. This captures the situation where Alice and Bob share a state and are allowed to perform local unitaries on their respective subsystems.
\end{remark}

Our next step is to define what it means for a state $f$ to contian infinite copies of state $g$. We start by looking at what it means for $f$ to contain $2$ copies of $g$. A state $f\in\mathcal S(\mathcal F)$ contains $2$ copies of $g\in\mathcal S(\mathcal G)$ if there exists a $*$-isomorphism $\pi:\mathcal G\otimes \mathcal G \to\mathcal F$ such that $f\circ\pi = g\otimes g$.  Define $\pi_1:\mathcal G\to\mathcal F, \pi_1(a) = \pi(a\otimes 1)$ and $\pi_2:\mathcal G\to\mathcal F,\ \pi_2(a) = \pi(1\otimes a)$, then $f\circ \pi = g\otimes g$ is equivalent to $f(\pi_1(a_1)\pi_2(a_2)) = g(a_1) g(a_2)$. If we define $\mathcal F_1:=C^*(\pi_1(G))$ and $\mathcal F_2 := C^*(\pi_2(G))$, then we have the condition $f(x y) = f(x) f(y)$ for all $x\in\mathcal F_1$, $y\in\mathcal F_2$. Furthermore, from the definition of $\mathcal F_1$ and $\mathcal F_2$, we have $\mathcal F_1$ commutes with $\mathcal F_2$.

\begin{definition}[Infinite State Containment]\label{def:infinite_copies}
Let $ g \in \mathcal{S}(\mathcal{G}) $ be a state on a C*-algebra $ \mathcal{G} $, and let $ f \in \mathcal{S}(\mathcal{F}) $ be a state on a C*-algebra $ \mathcal{F} $. 

We say that: \emph{$ f $ contains infinitely many copies of $ g $} if there exists family of observable sets $ \{ \Sigma^{(i)} \subset \mathcal{F} \}_{i \in \mathbb{N}} $ such that:

\begin{enumerate}
    \item (\textbf{Certification}) For each $i$, there exits a *-isomorphism $\pi^{(i)}:\mathcal G\to\mathrm{C}^*(\Sigma^{(i)}) $ such that \footnote{We use $\mathrm{C}^*(S)$ of a set $S$ to denote the unital C*-algebra generated by the set $S$.},
    \[
    f\left(\pi^{(i)}(P)\right) = g\left(P \right)\quad \forall P\in\mathcal G.
    \].
    \item (\textbf{Mutual Commutativity}) For all $ i \ne j $, and all $O^{(i)}\in\Sigma^{(i)}$, $P^{(j)}\in\Sigma^{(j)}$,
    \[
    [O^{(i)}, P^{(j)}] = 0.
    \]
    \item (\textbf{Independence}) For any finite subset $N\subset \mathbb N$, $\{a_i: a_i\in\mathcal G\}_{i\in N}$, 
    \[f\left(\prod_{i\in N} \pi^{(i)}(a_i)\right) =\prod_{i\in N} g(a_i)\]
\end{enumerate}
\end{definition}

\begin{remark}
This definition captures the idea that the state $ f $ supports infinitely many non-interacting, certifiable simulations of the state $ g $. Each measurement set $ \Sigma^{(i)} $ reveals the same statistical structure as $ g $, and the mutual commutativity ensures that these measurements can be performed jointly without disturbance. The independence condition ensures that copies of $g$ are independent of each other and are not correlated. We note that this definition is slightly different from declaring $f$ contains $g^{\otimes\infty}$, but we will show that the two definitions are equivalent to each other. Definition~\ref{def:infinite_copies} captures the essence of what infinite copies of $g$ means without the need for another infinite-dimensional algebra to describe $g^\infty$. 
\end{remark}
\begin{proposition}\label{prop:infinite_copies_equivelance}
    Let $\mathcal{G}$ be a unital C*-algebra and $g \in \mathcal{S}(\mathcal{G})$ be a pure state. Define $\mathcal{C}$ to be the inductive limit of infinite minimal tensor product of $\mathcal G$,  $\mathcal C:= \bigotimes_{i\in\mathbb N} \mathcal{G}$\footnote{We use this $\bigotimes_{i\in\mathbb N} \mathcal G$ notation to denote $\varinjlim_n \bigotimes_{i=1}^n \mathcal G$, the completion of inductive limit of infinite tensor product.}, and define $g^{\infty} := g^{\otimes \infty}$ to be the unique product state on $\mathcal{C}$ such that for all $n \in \mathbb{N}$ and $a_1, \ldots, a_n \in \mathcal{G}$,
    \[
    g^{\infty}(a_1 \otimes \cdots \otimes a_n \otimes 1 \otimes 1 \cdots) = g(a_1)\cdots g(a_n).
    \]

    Then $f\in\mathcal S(\mathcal F)$ contains infinitely many copies of $g$ (in the sense of Definition~\ref{def:infinite_copies}) if and only if $f$ contains $g^{\infty}$ (in the sense of Definition~\ref{def:state_containment}).
\end{proposition}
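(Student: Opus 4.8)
The plan is to prove both directions by translating between the "indexed family of commuting subalgebras" picture of Definition \ref{def:infinite_copies} and the "single monomorphism out of the infinite tensor product" picture of Definition \ref{def:state_containment}.

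For the direction ($f$ contains $g^\infty$ $\Rightarrow$ $f$ contains infinitely many copies of $g$): suppose $\pi:\mathcal C\to\mathcal B$ is a $*$-monomorphism with $f\circ\pi = g^\infty$. For each $i\in\mathbb N$, let $\iota_i:\mathcal A\to\mathcal C$ be the canonical embedding of $\mathcal A$ as the $i$-th tensor factor, and set $\pi^{(i)} := \pi\circ\iota_i$ and $\Sigma^{(i)} := \pi^{(i)}(\mathcal A)$. Since $\iota_i$ is an isometric $*$-monomorphism and $\pi$ is injective, $\pi^{(i)}$ is a $*$-isomorphism onto $\mathrm C^*(\Sigma^{(i)})$. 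The certification condition follows because $g^\infty(\iota_i(P)) = g(P)$ by the defining property of the product state, so $f(\pi^{(i)}(P)) = f\circ\pi(\iota_i(P)) = g^\infty(\iota_i(P)) = g(P)$. Mutual commutativity holds because $\iota_i(\mathcal A)$ and $\iota_j(\mathcal A)$ commute inside $\mathcal C$ for $i\ne j$ (they sit in different tensor legs), and $\pi$ is a homomorphism, hence preserves commutation.

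For the converse ($f$ contains infinitely many copies of $g$ $\Rightarrow$ $f$ contains $g^\infty$): given the family $\{\Sigma^{(i)}\}$ with isomorphisms $\pi^{(i)}:\mathcal A\to\mathrm C^*(\Sigma^{(i)})$, I want to assemble a single $*$-homomorphism $\pi:\mathcal C\to\mathcal B$. Work at the level of finite tensor products first: for each $n$, the algebras $\mathrm C^*(\Sigma^{(1)}),\dots,\mathrm C^*(\Sigma^{(n)})$ are pairwise commuting C*-subalgebras of $\mathcal B$, each isomorphic to $\mathcal A$ via $\pi^{(i)}$. By the universal property of the maximal tensor product there is a $*$-homomorphism $\bigotimes_{i=1}^n\mathcal A \to \mathcal B$ extending the $\pi^{(i)}$; since $g$ is pure, $\mathcal A$ is not automatically nuclear, so this is where I must be careful — but the commuting images give a well-defined map out of $\bigotimes_{\max}$, and composing with the quotient $\bigotimes_{\min}\to\bigotimes_{\max}$ (or arguing directly that the images generate a quotient of the spatial tensor product) yields $\pi_n:\bigotimes_{i=1}^n\mathcal A\to\mathcal B$. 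These maps are compatible with the inclusions $\bigotimes_{i=1}^n\mathcal A\hookrightarrow\bigotimes_{i=1}^{n+1}\mathcal A$ (sending $a\mapsto a\otimes 1$, which maps to $\pi_n(a)\cdot 1 = \pi_n(a)$), so they induce a $*$-homomorphism $\pi:\mathcal C\to\mathcal B$ on the inductive limit, automatically norm-decreasing and hence continuous. To see $\pi$ is injective: on each finite stage, $\pi_n$ restricted to $\bigotimes_{i=1}^n \mathrm C^*(\Sigma^{(i)})$ is built from the $*$-isomorphisms $\pi^{(i)}$ between commuting factors — here I would use that a product state (in particular a pure product state on a minimal tensor product of copies of $\mathcal A$) has faithful GNS-type structure sufficient to conclude $\pi_n$ is isometric; since the inductive limit norm is the sup of the finite-stage norms, injectivity passes to $\pi$. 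Finally, to verify $f\circ\pi = g^\infty$: both are states on $\mathcal C$, and they agree on the dense $*$-subalgebra $\bigcup_n\bigotimes_{i=1}^n\mathcal A$ because on a monomial $a_1\otimes\cdots\otimes a_n$ the image $\pi(a_1\otimes\cdots\otimes a_n) = \pi^{(1)}(a_1)\cdots\pi^{(n)}(a_n)$ is a product of elements of mutually commuting certified subalgebras, and using the certification condition together with a factorization argument for states across commuting subalgebras one gets $f(\pi^{(1)}(a_1)\cdots\pi^{(n)}(a_n)) = \prod_i g(a_i) = g^\infty(a_1\otimes\cdots\otimes a_n)$; continuity then gives equality everywhere.

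The main obstacle is the factorization step in the converse: showing that $f$ evaluated on a product of elements drawn from mutually commuting certified subalgebras factors as the product of the individual expectations. This is not automatic for an arbitrary state $f$ — it is precisely the statement that the $\Sigma^{(i)}$ are "independent" under $f$, not merely commuting. I expect to recover it from the purity of $g$: purity of each certified copy forces the GNS vector to be, in an appropriate sense, a product vector with respect to the commuting subalgebras, so the state is multiplicative on their product. Concretely, I would first establish the two-copy case (if $\pi^{(1)}(\mathcal A),\pi^{(2)}(\mathcal A)$ commute, each carries the pure state $g$ under $f$, then $f$ is multiplicative across them), likely via the fact that a pure state restricted to a subalgebra with commuting complement behaves like a product, and then induct on $n$. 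Handling the minimal-versus-maximal tensor product bookkeeping — ensuring the assembled map is genuinely defined on $\bigotimes_{\min}\mathcal A$ and that its norm is controlled — is a secondary technical point, eased by the remark in the paper that in the cases of interest $\mathcal A$ is nuclear so the two tensor products coincide.
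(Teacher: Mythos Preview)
Your overall architecture matches the paper's exactly: for one direction restrict the global monomorphism to each tensor leg, and for the other multiply the commuting images $\pi^{(i)}(a_i)$ and verify factorization. On the factorization step---which you correctly flag as the crux---the paper's mechanism is the standard lemma (cf.\ Takesaki) that any state on a C*-tensor product $\mathcal A\otimes\mathcal A$ whose marginals are both equal to the \emph{pure} state $g$ must be the product state $g\otimes g$. One pulls $f$ back along $a_1\otimes a_2\mapsto\pi^{(1)}(a_1)\pi^{(2)}(a_2)$, observes that both marginals are $g$ by the certification hypothesis, and concludes $f(\pi^{(1)}(a_1)\pi^{(2)}(a_2))=g(a_1)g(a_2)$ immediately; the $n$-fold case is identical. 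Your instinct (``pure state on a subalgebra with commuting complement behaves like a product'') is precisely this lemma, so no separate GNS analysis or inductive scaffolding is required.

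One genuine slip: you propose ``composing with the quotient $\bigotimes_{\min}\to\bigotimes_{\max}$'', but the canonical map runs the other way ($\bigotimes_{\max}\twoheadrightarrow\bigotimes_{\min}$, since max carries the larger C*-norm). Commuting embeddings yield a homomorphism out of $\bigotimes_{\max}$ by its universal property, and this does \emph{not} in general descend to $\bigotimes_{\min}$; your parenthetical alternative (``the images generate a quotient of the spatial tensor product'') has the same defect, since a priori they generate only a quotient of the maximal one. The paper is equally breezy here---it simply asserts the generated subalgebra is isomorphic to $\mathcal A\otimes_{\min}\mathcal A$ and that the assembled $\pi$ is a $*$-monomorphism---and both arguments are really leaning on the paper's standing remark that in the intended application $\mathcal A$ is a matrix algebra, hence nuclear, so $\min=\max$ and the issue evaporates. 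For a genuinely non-nuclear $\mathcal A$ this step would need additional justification in either proof.
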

\begin{proof}
    Suppose $f$ contains infinitely many copies of $g$, then we have a set of *-monomorphism $\{\pi_i: \mathcal G\to\mathcal F\}_{i\in\mathbb N}$. Define the *-monomorphism $\pi: \mathcal C\to\mathcal F$ as
        \[ \pi(a_1\otimes a_2\otimes\cdots \otimes a_n \otimes 1) = \prod_{i=1}^n \pi_i(a_i) \]
    Then 
    \[f(\pi(a_1\otimes\cdots \otimes a_n\otimes 1)) = \prod_{i=1}^n g(a_i) = g^\infty(a_1\otimes\cdots\otimes a_n\otimes 1).\]
    This is the definition for $f$ containing $g^\infty$.

         
    Now consider the case where $f$ contains $g^\infty$. Then there exists a *-monomorphism $\pi: \mathcal C\to\mathcal F$ such that $f\circ\pi = g^\infty$. We simply define $\pi^{(i)}(a) = \pi(1_{i-1}\otimes a \otimes 1)$ where $1_{i-1}$ is the identity on the first $i-1$ sites of $\mathcal C$. Then $\pi^{(i)}$ is a *-monomorphism from $\mathcal G$ to $\mathcal F$, and satisfies $f\circ\pi^{(i)} = g$. If we define $\mathcal F_i$ to be the image of $\pi_i$, it is a subalgebra of $\mathcal F$, and we get an *-isomorphism from $\mathcal G$ to $\mathcal F_i$. The commutativity relation is satisfied because each $\pi_i$ acts on different sites of $\mathcal C$, so $\mathcal F_i$ mutually commutes with each other. The independence relation is also satisfied directly from the definitions of $\pi^{(i)}$'s since $\prod_{i\in N} \pi^{(i)}(a_i) = \pi(b_1\otimes b_2\otimes\cdots)$ where $b_i = a_i$ if $i\in N$ and $b_i = 1$ otherwise.    
    Thus, $f$ contains infinitely many copies of $g$.
\end{proof}

Assume the state $g$ is split across two algebras, and we would like the containment of infinite copies of $g$ in $f$ to preserve the locality structure of $g$. We formally define this as local containment in the following sense.

\begin{definition}[Local Containment of Infinite copies]\label{def:local_infinite_containment}
Let \( f \in \mathcal{S}(\mathcal{F}) \) be a state on a C*-algebra \( \mathcal{F} \) admitting a local structure \( \mathcal{F}_A \otimes \mathcal{F}_B \subseteq \mathcal{F} \), and let \( g \in \mathcal{S}(\mathcal{G}_A \otimes \mathcal{G}_B) \) be a bipartite pure state.

We say that \( f \) \emph{locally contains infinitely many copies of} \( g \) if:
\begin{enumerate}
    \item \( f \) contains infinitely many copies of \( g \),
    \item For each \( n \), the *-isomorphism $\ \pi^{(n)}: \mathcal F_A\otimes\mathcal F_B \to \mathrm{C}^*(\Sigma^{(n)})$ can be factored as 
        \[ \pi^{(n)} = \pi_A^{(n)} \otimes \pi_B^{(n)}, \] where
    $\Sigma_A^{(n)} \subset \mathcal F_A$, $ \Sigma_B^{(n)} \subset \mathcal F_B,$
    and
    \[
    \pi_A^{(n)} : \mathcal G_A \to \mathrm{C}^*(\Sigma_A^{(n)}), \quad \pi_B^{(n)} : \mathcal G_B \to \mathrm{C}^*(\Sigma_B^{(n)})
    \]
    are *-isomorphisms.
\end{enumerate}
\end{definition}
This definition captures the idea that $f$ contains infinitely many copies of $g$ while preserving the locality structure of $g$. Each copy of $g$ is represented by a pair of mutually commuting observable sets, one contained in Alice's algebra and the other in Bob's. This ensures that the local operations and measurements that can be performed on each copy of $g$ are consistent with the original bipartite structure of $g$.

\begin{proposition}\label{prop:local_containment_infinite_copies} Use the definition of $f$ and $g$ in Def~\ref{def:local_infinite_containment}. Define $g^\infty := g^{\otimes\infty}$ as in Prop~\ref{prop:infinite_copies_equivelance} acting on the algebra $\mathcal C := \bigotimes_{i\in\mathbb N}\mathcal G_A\otimes\mathcal G_B$. Then $f$ locally contains infinite copies of $g$ if and only if $f$ locally contains $g^\infty$.
\end{proposition}
\begin{proof}
    Assuming $f$ locally contains infinite copies of $g$. Then there exists two internally mutually commuting sets of *-monomorphisms $\left\{\pi_A^{(i)}\right\}_i$ and $\left\{\pi_B^{(i)}\right\}_i$ such that $f\circ(\pi_A^{(i)}\otimes \pi_B^{(i)})(G) = g(G)$. Define $\pi_A:\bigotimes_{i\in\mathbb N}\mathcal G_A\to \mathcal F_A$ such that
    \[\pi_A (a_1\otimes a_2\otimes\cdots \otimes a_n\otimes 1) = \prod_{i=1}^n\pi_A^{(i)}(a_i)\]
    and similarly for $\pi_B$. Then using the argument in Prop~\ref{prop:infinite_copies_equivelance}, we have for any $n$, 
    \[ f\circ(\pi_A\otimes\pi_B)(a_1\otimes\cdots\otimes a_n\otimes 1 \otimes b_1\otimes\cdots\otimes b_n\otimes 1) = \prod_{i=1}^n g(a_i\otimes b_i)\]. Therefore we have $f\circ(\pi_A\otimes\pi_B) = g^\infty$, which is the definition for $f$ locally containing $g^\infty$.

    Now assuming $f$ locally contains $g^\infty$, then there exists $\pi_A\otimes\pi_B$ such that $f\circ(\pi_A\otimes \pi_B)(G) = g^\infty (G)$. From Prop~\ref{prop:infinite_copies_equivelance}, Condition~1 of Definition~\ref{def:local_infinite_containment} is satisfied. We just need to show Condition 2 is satisfied. Define $\pi_A^{(i)} (a)= \pi_A(1_{i-1}\otimes a\otimes 1)$ and $\pi_B^{(i)}(b) = \pi_B(1_{i-1}\otimes b\otimes 1)$, then $\pi_A^{(i)}$ and $\pi_B^{(i)}$ are $*$-monomorphisms from $\mathcal G_A\otimes \mathcal G_B$ to $\mathcal F_A\otimes\mathcal F_B$. Define $\pi^{(i)} = \pi_A^{(i)}\otimes \pi_B^{(i)}.$ This matches with the definition of $\pi^{(i)} (a_i) = \pi(1_{i-1}\otimes a_i\otimes 1)$ in Proposition~\ref{prop:infinite_copies_equivelance}, and therefore is a *-isomorphism that satisfies the non-locality condition.

\end{proof}

\subsection{Embezzlement Implies Local Containment of Infinite Copies}

Now we have our main theorem, which states that if $f$ can embezzle $g$, then $f$ must contain infinitely many copies $g$ that respects the locality of $g$.

\begin{theorem}\label{thm:emb_local_infinite_copies}
Let $ \mathcal F_A $, $\mathcal G_A$, $ \mathcal F_B $, $\mathcal G_B$ be unital C*-algebras, and let
\[
\Phi_A: \mathcal F_A \otimes \mathcal G_A \to \mathcal F_A, \quad \Phi_B: \mathcal F_B \otimes \mathcal G_B\to \mathcal F_B
\]
be *-isomorphisms, inducing
\[
\Phi := \Phi_A \otimes \Phi_B: \mathcal F_A \otimes \mathcal F_B \otimes \mathcal G_A\otimes\mathcal G_B \to \mathcal F_A \otimes \mathcal F_B
\]
with implicit re-ordering of $\mathcal G_A$ and $\mathcal F_B$. 

Let $ f \in \mathcal{S}(\mathcal F_A \otimes \mathcal F_B) $ be a state such that
\[
f \circ \Phi = f \otimes g
\]
for some entangled $ g \in \mathcal{S}(\mathcal G_A\otimes\mathcal G_B) $. Then $ f $ locally contains infinitely many copies of $ g $.
\end{theorem}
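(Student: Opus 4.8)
The plan is to iterate the embezzlement *-isomorphism $\Phi$ to produce the infinitely many commuting copies, then package the result via Proposition~\ref{prop:infinite_copies_equivelance}. The key observation is that $\Phi_A:\mathcal A_f\otimes\mathcal A_g\to\mathcal A_f$ and $\Phi_B:\mathcal B_f\otimes\mathcal B_g\to\mathcal B_f$ are *-isomorphisms, so we can feed the output $\mathcal A_f$ back into the left slot and compose. Concretely, first I would define, for each $n\ge 1$, the isomorphism $\Phi_A^{(n)}:\mathcal A_f\otimes\mathcal A_g^{\otimes n}\to\mathcal A_f$ by $\Phi_A^{(n)} = \Phi_A\circ(\Phi_A\otimes\mathrm{id}_{\mathcal A_g})\circ\cdots\circ(\Phi_A\otimes\mathrm{id}_{\mathcal A_g^{\otimes n-1}})$, and similarly $\Phi_B^{(n)}$, so that $\Phi^{(n)} := \Phi_A^{(n)}\otimes\Phi_B^{(n)}$ is an isomorphism $\mathcal A_f\otimes\mathcal B_f\otimes(\mathcal A_g\otimes\mathcal B_g)^{\otimes n}\to\mathcal A_f\otimes\mathcal B_f$. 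Iterating the hypothesis $f\circ\Phi = f\otimes g$ then gives $f\circ\Phi^{(n)} = f\otimes g^{\otimes n}$: this follows by induction, since $f\circ\Phi^{(n)} = (f\circ\Phi)\circ(\Phi^{(n-1)}\otimes\mathrm{id}) = (f\otimes g)\circ(\Phi^{(n-1)}\otimes\mathrm{id}) = (f\circ\Phi^{(n-1)})\otimes g = f\otimes g^{\otimes(n-1)}\otimes g$, modulo bookkeeping of which tensor slots are being acted on.

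Next I would extract the individual copies. Fix $n$ and consider $\Phi^{(n)}$ restricted to the last $\mathcal A_g\otimes\mathcal B_g$ factor embedded as $1_{\mathcal A_f\otimes\mathcal B_f}\otimes 1_{(\mathcal A_g\otimes\mathcal B_g)^{\otimes n-1}}\otimes(\,\cdot\,)$; call this *-monomorphism $\rho^{(n)}:\mathcal A_g\otimes\mathcal B_g\to\mathcal A_f\otimes\mathcal B_f$. Because $\Phi^{(n)} = \Phi_A^{(n)}\otimes\Phi_B^{(n)}$ splits as a tensor product of local pieces, $\rho^{(n)}$ factors as $\rho_A^{(n)}\otimes\rho_B^{(n)}$ with $\rho_A^{(n)}:\mathcal A_g\to\mathcal A_f$ and $\rho_B^{(n)}:\mathcal B_g\to\mathcal B_f$ the corresponding restrictions of $\Phi_A^{(n)},\Phi_B^{(n)}$; this gives the required local factorization for Definition~\ref{def:local_containment}. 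The pullback identity $f\circ\Phi^{(n)} = f\otimes g^{\otimes n}$ restricted to this slot gives $f\circ\rho^{(n)}(G) = g(G)$ for all $G\in\mathcal A_g\otimes\mathcal B_g$, which is exactly the certification condition with $\Sigma_A^{(n)} = \rho_A^{(n)}(\mathcal A_g)$, $\Sigma_B^{(n)} = \rho_B^{(n)}(\mathcal B_g)$. For mutual commutativity of the copies $i\ne j$: inside $\Phi^{(n)}$ the distinct $\mathcal A_g$ (resp.\ $\mathcal B_g$) tensor slots commute before applying the isomorphism, and a *-isomorphism preserves commutation, so the images $\rho_A^{(i)}(\mathcal A_g)$ and $\rho_A^{(j)}(\mathcal A_g)$ commute in $\mathcal A_f$, and likewise on Bob's side; since $\mathcal A_f$ and $\mathcal B_f$ commute in $\mathcal A_f\otimes\mathcal B_f$, all cross terms commute too.

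There is one consistency issue to handle carefully: the copies extracted at level $n$ and at level $n+1$ must agree so that we genuinely get a single coherent family $\{\Sigma^{(i)}\}_{i\in\mathbb N}$ rather than a separate finite family for each $n$. This is where I would be most careful. The clean way is to build the whole thing at once: define the *-monomorphism $\pi:\mathcal A_g^{\otimes\infty}\otimes\mathcal B_g^{\otimes\infty}$ — more precisely the CAR-type inductive limit $(\mathcal A_g\otimes\mathcal B_g)^{\otimes\infty}$ — into $\mathcal A_f\otimes\mathcal B_f$ as the inductive limit of the $\rho^{(n)}$'s, checking that the connecting maps are compatible (the level-$n$ copy sits inside the level-$(n+1)$ construction as "the second-to-last slot", which forces a reindexing; alternatively, and more cleanly, set up the iteration so the freshly produced copy always lands in a fixed slot and the old ones shift, making compatibility automatic). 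Once $\pi = \pi_A\otimes\pi_B$ is constructed with $f\circ\pi = g^{\otimes\infty}$ in the local sense, I would invoke Proposition~\ref{prop:infinite_copies_equivelance} (in its evident local-structure-preserving refinement) to conclude that $f$ locally contains infinitely many copies of $g$. The main obstacle is precisely this bookkeeping of tensor-slot reindexing under iteration — everything else (induction on the pullback, preservation of commutation and of purity-based uniqueness) is routine given the earlier results, but getting the inductive limit and its local factorization to cohere requires writing the connecting isomorphisms explicitly.
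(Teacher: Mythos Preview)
Your overall strategy---iterate $\Phi$ and read off the copies from the tensor slots---is exactly the paper's, but your explicit choice of slot is wrong and collapses all the copies to one. With your composition $\Phi^{(n)}=\Phi\circ(\Phi^{(n-1)}\otimes\mathrm{id})$, the \emph{last} $\mathcal A_g\otimes\mathcal B_g$ factor is always absorbed by the \emph{outermost} $\Phi$, so
\[
\rho^{(n)}(G)\;=\;\Phi^{(n)}\bigl(1_{\mathcal A_f\otimes\mathcal B_f}\otimes 1^{\otimes(n-1)}\otimes G\bigr)\;=\;\Phi\bigl(\Phi^{(n-1)}(1\otimes 1^{\otimes(n-1)})\otimes G\bigr)\;=\;\Phi(1\otimes G),
\]
since $\Phi^{(n-1)}$ is unital. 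Hence $\rho^{(1)}=\rho^{(2)}=\cdots$, and you have produced a single copy of $g$, not infinitely many. (Commutativity and certification then hold trivially, but the family is not a witness for infinite containment.)

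The fix is to take the \emph{first} $\mathcal A_g\otimes\mathcal B_g$ slot instead: set $\rho^{(n)}(G)=\Phi^{(n)}(1\otimes G\otimes 1^{\otimes(n-1)})$. Then $\rho^{(n)}(G)=\Phi\bigl(\rho^{(n-1)}(G)\otimes 1\bigr)$, which is precisely the paper's recursion $M_{A,n}^{(i)}:=\Phi_A(M_{A,n-1}^{(i)}\otimes 1_{\mathcal A_g})$ (and similarly on the $B$ side). With this choice the $\rho^{(n)}$ are genuinely distinct, and for $m<n$ one checks that $\rho^{(m)}$ and $\rho^{(n)}$ are both restrictions of the single *-isomorphism $\Phi^{(n)}$ to disjoint tensor slots (slot $n-m+1$ and slot $1$ respectively), whence their images commute.

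Once this correction is made, the ``consistency issue'' you flag in your last paragraph evaporates: each $\rho^{(n)}$ is defined by its own finite recursion, independent of any ambient $N$, so there is no inductive-limit compatibility to verify. The paper accordingly checks the certification and commutativity conditions of the definition directly and never passes through Proposition~\ref{prop:infinite_copies_equivelance}; your proposed detour via an inductive-limit monomorphism is a valid alternative but is unnecessary extra bookkeeping here.
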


\begin{proof}
Let $ A\in \mathcal G_A $, $ B\in \mathcal G_B $ be observable sets that generate $\mathcal G_A$ and $\mathcal G_B$.
\[
\Sigma := \{A \otimes 1_{\mathcal G_B},\ 1_{\mathcal G_A} \otimes B \} \subset \mathcal G_A \otimes \mathcal G_B.
\]

Define, for $ n = 1 $,
\[
A^{(1)} := \Phi_A(1_{\mathcal F_A} \otimes A) \in \mathcal F_A, \quad
B^{(1)} :=  \Phi_B(1_{\mathcal F_B} \otimes B) \in \mathcal F_B,
\]
and recursively for $ n \ge 2 $,
\[
A^{(n)} := \Phi_A(A^{(n-1)} \otimes 1_{\mathcal G_A}), \quad
B^{(n)} := \Phi_B(B^{(n-1)}\otimes  1_{\mathcal G_B}).
\]

Then define the lifted observable set
\[
\Sigma^{(n)} := \left\{
A^{(n)}\otimes 1_{\mathcal F_B} ,\ 
 1_{\mathcal F_A}\otimes B^{(n)}\right\} \subset \mathcal F_A \otimes \mathcal F_B.
\]

We will show that the sequence $ \{ \Sigma^{(n)} \} $ satisfies:
\begin{itemize}
    \item \textbf{(Commutativity)} For all $ n \ne m $, and all $ O \in \Sigma^{(n)} $, $ O' \in \Sigma^{(m)} $, one has $ [O, O'] = 0 $.
    \item \textbf{(Certification)} For each $n$, there exists a *-isomorphism $\Phi^{(n)}:\mathcal G_A\otimes\mathcal G_B\to\mathrm{C}^*(\Sigma^{(n)})$ such that 
    \[
    f(\Phi^{(n)}(P)) = g(P)\quad \forall P\in \mathcal G_A\otimes\mathcal G_B.
    \]
    \item \textbf{(Independence)} For any finite $N\subset \mathbb N$, $P_i\in\mathcal G_A\otimes\mathcal G_B$, a collection of the *-isomorphism $\{\Phi^{(i)}\}_{i\in N}$ defined previously satisfies
    \[f\left(\prod_{i\in N} \Phi^{(i)}(P_i)\right) = \prod_{i\in N}g(P_i)\]
\end{itemize}

We start by defining the $\Phi_A^{(n)}:\mathcal F_A\otimes\mathcal G_A\to\mathcal F_A$ inductively as *-monomorphisms. 
\[\Phi_A^{(1)}(X)= \Phi_A(X)\]
and 
\[\Phi_A^{(n)}(X)= \Phi_A(\Phi_A^{(n-1)}(X)\otimes 1_{\mathcal G_A})\]
Then we have for $n \geq  1$,
\[\Phi_A^{(n)}(1_{\mathcal F_A}\otimes A) = A^{(n)}.\]
To show \textbf{commutativity}, note that  $A^{(n)} \otimes 1_{\mathcal F_B} $ trivially commutes with $1_{\mathcal F_A} \otimes B^{(m)}$. 
Hence, it suffices to show the commutativity between elements of $\{A^{(m)}:A\in\mathcal G_A\}$ and $\{A^{(n)}:A\in\mathcal G_A\}$; the same argument applies for $B$.

We first show that $A^{(1)}_x$ commutes with  $A_y^{(n)}$ for any $A_x, A_y\in\mathcal G_A$ and \(n > 1\). Observe that  $A_y^{(n-1)} \otimes 1_{\mathcal G_A}$commutes with $1_{\mathcal F_A} \otimes A_x^{(1)}$, and since \(\Phi_A\) is an isomorphism that preserves commutativity, it follows that
\[
A_x^{(1)} = \Phi_A(1\otimes A_x)\quad \text{commutes with} \quad A_y^{(n)}= \Phi_A(A^{(n-1)}_y \otimes 1_{\mathcal G_A}).
\]
Now, for \(m \neq n\), assume without loss of generality that \(m < n\). Then 
\[A^{(m)}_x = \Phi_A^{(m)}(1_{\mathcal F_A}\otimes A_x^{(1)}), \quad \text{and} \quad  A_y^{(n)} = \Phi_A^{(m)}(A_y^{(n-m)}\otimes 1_{\mathcal G_A}).\]
$\Phi_A^{(m)}$ is a *-homomorphism that preserves commutivity and $1_{\mathcal F_A} \otimes A_x^{(1)}$ commutes with $A_y^{(n-m)}\otimes 1_{\mathcal G_A}$, so $A_x^{(m)}$ commutes with $A_y^{(n)}$ for all $m\neq n$. 

Using the same argument, we can easily see that $B_x^{(m)}$ and $B_y^{(n)}$ commutes for all $m\neq n$. Therefore elements of $\Sigma^{(n)}$ commutes with elements of $\Sigma^{(m)}$ for all $m\neq n$. 

\smallskip

To show \textbf{certification}:
Define the following map
\[\tilde\Phi_A^{(n)}:\mathrm {C}^*(\Sigma)\to\mathrm{C}^*(\Sigma^{(n)}), \quad \tilde \Phi_A^{(n)} (X) = \Phi_A^{(n)}(1_{\mathcal F_A}\otimes X).\]
$\tilde\Phi_A^{(n)}$ is a *-isomorphism because it is one-to-one and its image is the very definition of the full algebra of $\mathrm{C}^*(\Sigma^{(n)})$. Moreover,  
\[\tilde\Phi_A^{(n)}(A^{(1)}) = A^{(n)}.\]
We define $\tilde \Phi_B^{(n)}$ similarly, then the combined map $\Phi^{(n)} = \tilde \Phi_A^{(n)}\otimes\tilde \Phi_B^{(n)}$ is a *-isomorphism between $\textrm{C}^*(\Sigma) =  \mathcal G_A\otimes\mathcal G_B$ and $\mathrm{C}^*(\Sigma^{(n)})$. We further have
\[\Phi^{(n)}(X) = \Phi_A\otimes\Phi_B(\Phi^{(n-1)}_A\otimes\Phi_B^{(n-1)}(X)\otimes 1_{\mathcal G_A}\otimes 1_{\mathcal G_B}) = \Phi(\Phi^{(n-1)}(X)\otimes 1_{\mathcal G_A}\otimes 1_{\mathcal G_B}).\]

Using the assumption $f\circ \Phi = f\otimes g$, we can show that for any $A\in\mathcal G_A$, $B\in\mathcal G_B$
\[f(\Phi^{(1)}(A\otimes B)) = f\circ\Phi(1_{\mathcal F_A}\otimes 1_{\mathcal F_B}\otimes A\otimes B) = f(1)g(A\otimes B) = g(A\otimes B).\]
Inductively,
\begin{eqnarray*}
    f(\Phi^{(n)}(A\otimes B))&=& f(\Phi(\Phi^{(n-1)}(A\otimes B)\otimes 1_{\mathcal G_A}\otimes 1_{\mathcal G_B}))\\
    &=& f\otimes g(\Phi^{(n-1)}(A\otimes B)\otimes 1_{\mathcal G_A}\otimes 1_{\mathcal G_B}) \\ 
    &=& f(\Phi^{(n-1)}(A\otimes B) ) g(1) = f(\Phi^{(n-1)}(A\otimes B))\\
    &=& \cdots = g(A\otimes B)
\end{eqnarray*}
Therefore, by linearity
\[
f(\Phi^{(n)}(P)) = g(P)\quad \forall P\in\mathcal G_A\otimes \mathcal G_B.
\]
To show \textbf{independence}, we start by considering the simpler case of $f(\Phi^{(i)}(P_i)\Phi^{(j)}(P_j))$ with $i > j$.
\begin{eqnarray*}
    f(\Phi^{(i)}(P_i)\Phi^{(j)}(P_j)) &=& f(\Phi(\Phi^{(i-1)}(P_i)\otimes 1)\Phi(\Phi^{(j-1)}(P_j)\otimes 1))\\
    &=&f(\Phi(\Phi^{(i-1)}(P_i)\Phi^{(j-1)}(P_j)\otimes 1))\\
    &=& f \otimes g (\Phi^{(i-1)}(P_i)\Phi^{(j-1)}(P_j)\otimes 1) \\
    &=& f(\Phi^{(i-1)}(P_i)\Phi^{(j-1)}(P_j)) = \cdots \\
    &=& f(\Phi^{(i-j+1)}(P_i)\Phi^{(1)}(P_j))
\end{eqnarray*}
Recall that $\Phi^{(1)}(X) =\tilde\Phi_A^{(1)}\otimes\tilde\Phi_B^{(1)}(X) = \Phi_A^{(1)}\otimes\Phi_B^{(1)}(1\otimes X) = \Phi(1\otimes X)$. We get
\begin{eqnarray*}
    f(\Phi^{(i)}(P_i)\Phi^{(j)}(P_j)) &=& f(\Phi(\Phi^{(i-j)}(P_i)\otimes 1)\Phi(1\otimes P_j))\\
    &=& f(\Phi(\Phi^{(i-j)}(P_i)\otimes P_j)) \\
    &=& f\otimes g(\Phi^{(i-1)}(P_i)\otimes P_j) \\
    &=& f(\Phi^{(i-1)}(P_i)) g(P_j)\\
    &=& g(P_i)g(P_j).
\end{eqnarray*}
For any finite $N\subset\mathbb N$, the commutivity condition allows us to write $\prod_{i\in N}\Phi^{(i)}(P_i)$ as a multiplication in any order. We arrange the $\Phi^{(i)}$'s in descending order of $i$, and apply the same argument as above. Then we can derive
\[f\left(\prod_{i\in N} \Phi^{(i)}(P_i)\right) = \prod_{i\in N} g(P_i).\]

Moreover, since the isomorphism $\Phi^{(n)}$ is defined as $\tilde \Phi_A^{(n)}\otimes\tilde\Phi_B^{(n)}$ where locality is preserved for all $n$, we have our result $f$ locally contains infinite copies of $g$.
\end{proof}

\begin{corollary}
    Let $\mathcal F_A$, $\mathcal F_B$ be unital C*-algebras, and $f\in \mathcal S(\mathcal F_A\otimes\mathcal F_B)$ be a state. Then $f$ can be used as the catalyst to embezzle the state $g\in\mathcal S(\mathcal G_A\otimes\mathcal G_B)$ if $f$ locally contains infinite copies of $g$. Moreover, a state that is infinite copies of $g$ suffices to embezzle $g$.
\end{corollary}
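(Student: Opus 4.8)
The Corollary is the conjunction of two implications, and the forward one---if $f$ embezzles $g$ then $f$ locally contains infinitely many copies of $g$---is exactly Theorem~\ref{thm:emb_local_infinite_copies}. So the plan is to supply the converse: from a local containment of infinitely many copies of $g$ in $f$, reconstruct embezzling $*$-isomorphisms $\Phi_A\colon\mathcal A_f\otimes\mathcal A_g\to\mathcal A_f$, $\Phi_B\colon\mathcal B_f\otimes\mathcal B_g\to\mathcal B_f$ with $f\circ(\Phi_A\otimes\Phi_B)=f\otimes g$. Conceptually this is a ``Hilbert hotel'' argument: the infinitely many mutually commuting copies already present inside $f$ leave room to insert one more, and inserting it is what embezzlement does.

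Concretely, I would first package the copies. Adapting the proof of Proposition~\ref{prop:infinite_copies_equivelance} to the locality-preserving setting, the families $\{\pi_A^{(n)}\}_n$ and $\{\pi_B^{(n)}\}_n$ assemble into unital $*$-monomorphisms $\pi_A\colon\bigotimes_{n}\mathcal A_g\hookrightarrow\mathcal A_f$ and $\pi_B\colon\bigotimes_{n}\mathcal B_g\hookrightarrow\mathcal B_f$ with $f\circ(\pi_A\otimes\pi_B)=g^{\infty}$, where $g^{\infty}=\bigotimes_n g$ under the reindexing $\bigl(\bigotimes_n\mathcal A_g\bigr)\otimes\bigl(\bigotimes_n\mathcal B_g\bigr)\cong\bigotimes_n(\mathcal A_g\otimes\mathcal B_g)$; that these are $*$-monomorphisms and that the pulled-back state is the \emph{product} state both use purity of $g$, via uniqueness of the product extension, exactly as in that proof. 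Next, let $s_A\colon\bigl(\bigotimes_{n}\mathcal A_g\bigr)\otimes\mathcal A_g\to\bigotimes_{n}\mathcal A_g$ be the shift $*$-isomorphism sending the incoming copy to slot~$1$ and slot~$n$ to slot~$n+1$, and let $s_B$ be its analogue. Renumbering tensor factors shows that $(s_A\otimes s_B)$ embezzles $g$ on the bare system $\bigl(\bigotimes_n\mathcal A_g\otimes\bigotimes_n\mathcal B_g,\ g^\infty\bigr)$, i.e.\ $g^{\infty}\circ(s_A\otimes s_B)=g^{\infty}\otimes g$. To transport this onto $\mathcal A_f$, decompose $\mathcal A_f$ along the first embedded copy: since that copy is a unital copy of the matrix algebra $\mathcal A_g$, one has $\mathcal A_f\cong\mathcal A_g\otimes C_A$ with $C_A$ the relative commutant of the copy in $\mathcal A_f$, and $C_A$ again contains $\bigotimes_{n\ge 2}\mathcal A_g$ unitally; dually for $\mathcal B_f$. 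Using these decompositions I would define $\Phi_A$ to route the incoming $\mathcal A_g$ into the first embedded copy and to act by $s_A$ (conjugated through $\pi_A$) on the remaining copies while carrying the rest of $C_A$ along untouched, and likewise $\Phi_B$. The identity $f\circ(\Phi_A\otimes\Phi_B)=f\otimes g$ then follows: on the subalgebra generated by the copies it is the $g^{\infty}$ computation above, and on the complementary part both sides agree because $g(1)=1$ and $f$ is unchanged there---the same bookkeeping as in Theorem~\ref{thm:emb_local_infinite_copies}, read in reverse.

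The step I expect to be the genuine obstacle, as opposed to bookkeeping, is promoting $\Phi_A$ and $\Phi_B$ to honest $*$-isomorphisms of the \emph{entire} algebras, equivalently arranging $\mathcal A_f\otimes\mathcal A_g\cong\mathcal A_f$ (and the $\mathcal B$-analogue) in a way compatible with $f$. The relative-commutant decomposition reduces this to absorbing one further matrix factor into the tail $C_A$, which still contains $\bigotimes_n\mathcal A_g$ unitally, so one must iterate the decomposition and pass to a limit to match the part of $\mathcal A_f$ that is invisible to the copies on the two sides; purity of $g$ is used once more to ensure that $f$ restricted to the copy algebra is genuinely a product state, hence really shifts to $f\otimes g$ rather than to some other state with the same marginals. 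Controlling this interaction between the embedded infinite tower of copies and the ambient algebra is where the argument has real content; given it, the Corollary follows by combining the two directions.
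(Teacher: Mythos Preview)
Your forward direction and your overall strategy for the converse (package the embedded copies into $\bigotimes_n\mathcal A_g\hookrightarrow\mathcal A_f$, then run a Hilbert-hotel shift) match the paper exactly. In fact your account of the converse is considerably more detailed than the paper's, which simply invokes~\cite{cleve2017perfect} and asserts that ``the catalyst just needs to be changed into infinite copies of $g$ tensor infinite copies of $\ket{00}$''---that is, the paper exhibits the shift protocol on the specific catalyst $g^{\infty}\otimes\ket{00}^{\infty}$ without spelling out how it is transported to an arbitrary $f$ on arbitrary $\mathcal A_f,\mathcal B_f$ that merely \emph{locally contains} $g^{\infty}$.

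The obstacle you single out---promoting the shift on the embedded tower to an honest $*$-isomorphism $\mathcal A_f\otimes\mathcal A_g\to\mathcal A_f$---is genuine, and the paper does not address it either. Your relative-commutant decomposition is the right move when $\mathcal A_g$ is a matrix algebra (it gives $\mathcal A_f\cong\mathcal A_g^{\otimes k}\otimes C_k$ for every $k$), but the ``iterate and pass to a limit'' sketch does not by itself yield the absorption $\mathcal A_f\otimes\mathcal A_g\cong\mathcal A_f$: that isomorphism is precisely what the embezzlement hypothesis of Theorem~\ref{thm:emb_local_infinite_copies} presupposes, so either an additional argument or an enlargement of the catalyst in the spirit of Proposition~\ref{prop:equivalence_embezzlement} is required to close the loop. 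In short: same idea as the paper, same soft spot, and you have located it correctly.
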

\begin{proof}
The first statement follows directly from  Theorem~\ref{thm:emb_local_infinite_copies} corresponds exactly to the situation where $f$ is the catalyst that embezzles $g$.

The second statement is an extension \cite{cleve2017perfect}, where an embezzlement protocol for Bell state is formulated using a catalyst consisting infinite copies of the Bell state with Hilbert-hotel intuition. Using the exact intution and formulation, to embezzle any target state $g$, the catalyst just needs to be changed into infinite copies of $g$ tensor infintie copies of $\ket{00}$..
\end{proof}

\section{Implications of Certifying Infinite Copies of $g$}

\subsection{Necessarity of Large Catalyst for Universal Embezzlement}
In \cite{liu2025explicit}, the explicit protocol of universal embezzlement is constructed by combining the embezzling catalyst of all different states into a gigantic catalyst state. In particular, the approximate universal embezzlement protocol is achieved by a catalyst that can exactly embezzle a dense set of all $n$-qubit states. One might question that if the state in this particular construction is unnecessarily large. Our result in this paper implies that, in order to exactly embezzle a countable dense set of all $n$-qubit states, the catalyst state in \cite{liu2025explicit} is in fact needed. 
\begin{proposition}\label{prop:necessity_large_catalyst}
    Let $s\in\mathcal S(\mathcal A\otimes\mathcal B)$ be a catalyst state for embezzlement that can embezzle any state of the form 
    \(\ket{\psi} = x_0\ket{00} + \cdots + x_n \ket{nn}\) with $x_i \in \mathbb Q^+$, 
    then $s$ must locally contain contably infinite copies of all states $\ket\psi$ with rational Schmidt coefficients simultaneously.
\end{proposition}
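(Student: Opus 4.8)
The plan is to build, for each target state $\ket{\psi}$ with rational Schmidt coefficients, an associated $*$-isomorphism as in Theorem~\ref{thm:emb_local_infinite_copies}, and then show that these can be combined into a single $*$-monomorphism from the minimal tensor product of all the $g_\psi^{\infty}$ into $\mathcal A \otimes \mathcal B$. First I would invoke Theorem~\ref{thm:emb_local_infinite_copies} directly: since $s$ embezzles each such $\ket{\psi}$, for each $\psi$ we obtain that $s$ locally contains infinitely many copies of $g_\psi$, hence (via Proposition~\ref{prop:infinite_copies_equivelance}, applied in the local version) a pair of local $*$-monomorphisms $\pi_A^\psi : \mathcal A_{g_\psi}^{\infty} \to \mathcal A$ and $\pi_B^\psi : \mathcal B_{g_\psi}^{\infty} \to \mathcal B$ with $s \circ (\pi_A^\psi \otimes \pi_B^\psi) = g_\psi^{\infty}$. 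Enumerate the countable dense set of target states as $\{\psi_1, \psi_2, \dots\}$.

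The main work is to show these families $\{\pi_A^{\psi_k}\}_k$ and $\{\pi_B^{\psi_k}\}_k$ can be chosen with mutually commuting images, so that they assemble into a single local monomorphism from $\bigotimes_{k} (\mathcal A_{g_{\psi_k}}^{\infty} \otimes \mathcal B_{g_{\psi_k}}^{\infty})$. Here I would proceed inductively, mimicking the construction in the proof of Theorem~\ref{thm:emb_local_infinite_copies}: having embedded copies of $g_{\psi_1}, \dots, g_{\psi_{k-1}}$ using the isomorphisms $\Phi_A, \Phi_B$ repeatedly, I apply $\Phi_A, \Phi_B$ once more with the $\mathcal A_{g_{\psi_k}}, \mathcal B_{g_{\psi_k}}$ slot to pull a fresh copy of $g_{\psi_k}$ into $\mathcal A_f \otimes \mathcal B_f$; commutativity with the previously-placed copies follows because $\Phi_A, \Phi_B$ are isomorphisms preserving commutativity, exactly as in the earlier proof. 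Iterating, I get infinitely many mutually commuting copies of each $g_{\psi_k}$, all mutually commuting across different $k$. Purity of each $g_{\psi_k}$ together with the uniqueness of the product state extension (as used in Proposition~\ref{prop:infinite_copies_equivelance}) then forces $s$ restricted to the combined image to be exactly $\bigotimes_k g_{\psi_k}^{\infty}$, since all the relevant marginals are pure product states.

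The hard part will be the bookkeeping of the interleaving: one must carefully order the countably many ``copy-creation'' steps so that every $g_{\psi_k}$ is revisited infinitely often (a standard dovetailing of $\mathbb N \times \mathbb N$), and check that at each finite stage the newly created observable set genuinely commutes with all earlier ones — the subtlety being that $\Phi_A$ acts on $\mathcal A_f \otimes \mathcal A_{g_{\psi_k}}$, so one has to track which tensor slot each earlier copy now lives in after subsequent applications of $\Phi_A$. This is precisely the commutativity argument in Theorem~\ref{thm:emb_local_infinite_copies} but carried out with a varying inner algebra; I expect it to go through verbatim once the indexing is fixed. A minor additional point: since each $g_{\psi_k}$ is an $n$-qubit state, each $\mathcal A_{g_{\psi_k}}^\infty$ is a CAR algebra, hence nuclear, so the minimal tensor product $\bigotimes_k (\mathcal A_{g_{\psi_k}}^\infty \otimes \mathcal B_{g_{\psi_k}}^\infty)$ is unambiguous and the ``local containment'' conclusion is well-posed; I would note this but not dwell on it, per the discussion preceding Definition~\ref{def:state_containment}.
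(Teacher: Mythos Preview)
Your proposal is correct and would work, but it is organized differently from the paper's proof. The paper does \emph{not} reopen the inductive construction of Theorem~\ref{thm:emb_local_infinite_copies} with a dovetailing over $\mathbb N\times\mathbb N$. Instead, it first \emph{composes} the individual embezzlement isomorphisms $\Phi_{Ai},\Phi_{Bi}$ along the single enumeration $g_0,g_1,\ldots$ to obtain, via an inductive limit, a single pair of $*$-isomorphisms $\Phi_A:\mathcal A\otimes\bigotimes_i\mathcal C_i\to\mathcal A$ and $\Phi_B:\mathcal B\otimes\bigotimes_i\mathcal D_i\to\mathcal B$ witnessing that $s$ embezzles the one state $g=\bigotimes_i g_i$. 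Theorem~\ref{thm:emb_local_infinite_copies} is then applied \emph{once}, to this $g$, yielding that $s$ locally contains $g^\infty$. The final step is a purely algebraic reindexing $( \bigotimes_i g_i)^{\infty}\cong \bigotimes_i g_i^{\infty}$ via the bijection $(n,i)\mapsto(i,n)$ on tensor legs, checked to respect the $A/B$ split.

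Your route trades that reindexing lemma for the dovetailing bookkeeping: you build the commuting observable families directly, so no $g^\infty\cong\bigotimes_i g_i^\infty$ identification is needed, but in exchange you must track, at every finite stage, which embezzlement map $\Phi_{A,k}$ is being applied and verify commutativity across steps with \emph{different} inner algebras $\mathcal A_{g_{\psi_k}}$ (this does go through, since at each stage the new slot is tensored in fresh and the composite map is an isomorphism). The paper's approach is more modular---one invocation of the main theorem plus a clean tensor-permutation lemma---whereas yours is more hands-on and effectively re-proves a variant of Theorem~\ref{thm:emb_local_infinite_copies} in situ. Both rely on purity and the product-state uniqueness from Proposition~\ref{prop:infinite_copies_equivelance} at the end.
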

\begin{proof}
    Define 
    \[
      \Gamma = \{\ket\psi: \ket{\psi} = x_0\ket{00} + \cdots + x_n \ket{nn},\; x_i \in \mathbb Q^+\}.
    \]
    This is the set of all bipartite states with rational Schmidt coefficients. Since $\Gamma$ is countable, we may index it as $\Gamma=\{\ket{\psi_i}\}_{i\in\mathbb N}$.

    By Theorem~\ref{thm:emb_local_infinite_copies}, if $s$ can embezzle a state $g_i \in \mathcal S(\mathcal C_i \otimes \mathcal D_i)$ corresponding to $\ket{\psi_i}$, then $s$ must locally contain infinitely many copies of $g_i$. Concretely, there exist *-isomorphisms
\[
  \Phi_{Ai} : \mathcal A \otimes \mathcal C_i \to \mathcal A, 
  \qquad
  \Phi_{Bi} : \mathcal B \otimes \mathcal D_i \to \mathcal B
\]
such that 
\[
  s \circ (\Phi_{Ai} \otimes \Phi_{Bi}) = s \otimes g_i,
\]
where the tensor product on the right-hand side is understood as the standard extension of states to the product algebra.

Set $\Phi_A^0 = \Phi_{A0}$, $\Phi_B^0 = \Phi_{B0}$, and define recursively
\[
  \Phi_A^n = \Phi_{An} \circ (\Phi_A^{n-1} \otimes I), 
  \qquad
  \Phi_B^n = \Phi_{Bn} \circ (\Phi_B^{n-1} \otimes I).
\]
By induction, this yields
\[
  s \circ (\Phi_A^n \otimes \Phi_B^n) = s \otimes g_0 \otimes \cdots \otimes g_n.
\]
Since each $\Phi_A^n$ and $\Phi_B^n$ is a *-isomorphism, $s$ can embezzle the finite tensor product $g_0 \otimes \cdots \otimes g_n$, and therefore locally contains infinitely many copies of it. By Proposition~\ref{prop:infinite_copies_equivelance}, this is equivalent to
\[
  (g_0 \otimes \cdots \otimes g_n)^{\otimes \infty} \;\cong\; g_0^{\otimes \infty} \otimes \cdots \otimes g_n^{\otimes \infty}.
\]

To extend to the infinite case, define
\[
  \mathcal C = \bigotimes_{i\in\mathbb N} \mathcal C_i, 
  \qquad
  \mathcal D = \bigotimes_{i\in\mathbb N} \mathcal D_i
\]
as the inductive limit of the finite minimal tensor products. We then define *-isomorphisms
\[
  \Phi_A : \mathcal A \otimes \mathcal C \to \mathcal A, 
  \qquad
  \Phi_B : \mathcal B \otimes \mathcal D \to \mathcal B
\]
by requiring, for all $n \in \mathbb N$, $a \in \mathcal A$, $b \in \mathcal B$, and $c_i \in \mathcal C_i$, $d_i \in \mathcal D_i$ with $i \le n$,
\[
  \Phi_A(a \otimes c_1 \otimes \cdots \otimes c_n \otimes 1) = \Phi_A^n(a \otimes c_1 \otimes \cdots \otimes c_n),
  \quad
  \Phi_B(b \otimes d_1 \otimes \cdots \otimes d_n \otimes 1) = \Phi_B^n(b \otimes d_1 \otimes \cdots \otimes d_n).
\]
Equivalently, $\Phi_A$ and $\Phi_B$ are the inductive limits of the sequences $\{\Phi_A^n\}$ and $\{\Phi_B^n\}$, and the compatibility condition
\[
\Phi_A^m = \Phi_A^n|_{\mathcal A\otimes \bigotimes_{i\le m} \mathcal C_i}, \qquad 
\Phi_B^m = \Phi_B^n|_{\mathcal B\otimes \bigotimes_{i\le m} \mathcal D_i} \quad (m<n)
\]
is automatically satisfied by the recursive construction.

Let $g=\bigotimes_i g_i\in\mathcal S(\mathcal C\otimes\mathcal D)$. Then
\[
  s\circ(\Phi_A\otimes\Phi_B) = s\otimes g,
\]
so $s$ can embezzle $g$. By the main theorem, $s$ must locally contain infinitely many copies of $g$. Proposition~\ref{prop:infinite_copies_equivelance} then implies that $s$ locally contains $g^\infty$.

Finally, we identify $g^\infty$. Let $\mathcal F_i=\mathcal C_i\otimes\mathcal D_i$ and
\[
  \mathcal F = \bigotimes_{i\in\mathbb N} \mathcal F_i = \mathcal C\otimes \mathcal D.
\]
Define
\[
  \mathcal F^\infty = \bigotimes_{n\in\mathbb N}\mathcal F^{(n)},
  \qquad
  \mathcal F_i^\infty = \bigotimes_{n\in\mathbb N} \mathcal F_i^{(n)}.
\]
An elementary tensor in $\mathcal F^\infty$ has the form
\[
  x = \bigotimes_{n\in E} \bigotimes_{i\in 1_n} x_{n,i},
\]
where $E\subset\mathbb N$ is finite, $1_n\subset\mathbb N$ is finite for each $n$, and $x_{n,i}\in \mathcal F_i$ 
(all other legs are identity). Similarly, an elementary tensor in $\bigotimes_i \mathcal F_i^\infty$ has the form
\[
  y = \bigotimes_{i\in J} \bigotimes_{n\in E_i} y_{i,n},
\]
where $J\subset\mathbb N$ is finite, each $E_i\subset\mathbb N$ is finite, and $y_{i,n}\in \mathcal F_i$.

The correspondence between the two descriptions comes from reindexing coordinates by the bijection
\((n,i)\mapsto(i,n)\). Explicitly,
\[
  \Phi\!\left(\bigotimes_{n\in E}\ \bigotimes_{i\in 1_n} x_{n,i}\right)
  = \bigotimes_{i\in \bigcup 1_n}\ \bigotimes_{n\in E_i} x_{n,i},
  \qquad E_i=\{n:(n,i)\in E\times 1_n\}.
\]
This is a permutation of finitely many tensor factors. Since the tensor norm is symmetric, $\Phi$ is a *-isomorphism; its inverse reverses the permutation. Passing to the completion, we obtain
\[
  \mathcal F^\infty \;\cong\; \bigotimes_{i\in\mathbb N} \mathcal F_i^\infty.
\]
Using this *-isomorphism $\Phi$, we can transform $g^\infty$ into $\bigotimes_i (g_i^\infty)$. We can further decompose $\Phi$ as $\Phi_A^\infty\otimes\Phi_B^\infty$, where $\Phi_A^\infty$ and $\Phi_B^\infty$ act analogously to $\Phi$ but on Alice's and Bob's local systems, respectively. Importantly, the permutation preserves the bipartite structure because each $g_i$ is an entangled state across $\mathcal C_i \otimes \mathcal D_i$, so the rearrangement only permutes tensor factors without affecting the internal entanglement. Consequently, $g^\infty$ can be locally transformed into $\bigotimes_i (g_i^\infty)$, and $s$ therefore locally contains infinite copies of all $g_i$ simultaneously.
\end{proof}

\section{Conclusion and Discussions}
We have shown that exact embezzlement forces the catalyst state to exhibit a specific locality structure. There are some further implications of our results and open questions to be explored.

More concretely, we proved that to embezzle a given state $g$ exactly, the catalyst must locally contain infinitely many copies of $g$. Moreover, the Hilbert-hotel–style protocol of \cite{cleve2017perfect} allows a catalyst that is of the form of infinite copies of $g$ to implement the embezzlement protocol, showing that any exact single-state protocol can ultimately be reduced to this canonical construction. Thus, the protocol of \cite{cleve2017perfect} may be regarded as the \emph{reference} protocol for exact embezzlement.

On the universal side, the explicit construction of \cite{liu2025explicit} already requires tensor products of states living in a non-separable Hilbert space. Proposition~\ref{prop:necessity_large_catalyst} can be extended to show that any exact universal embezzler must locally contain infinitely many copies of each state from any countable family of target states, simultaneously. However, it remains unclear how to strengthen this argument to cover the full uncountable collection of all target states. A resolution of this issue would provide an alternative proof that exact universal embezzlement necessarily requires non-separable spaces.

All our arguments crucially use the \emph{exactness} of the embezzlement task. A natural open problem is to investigate whether approximate embezzlement can be seen as robust self-tests. Can approximate embezzlers be said to locally contain infinitely many copies of the target state approximately? What does it mean to claim $f$ locally contains infinite copies of $g$ approximately in this context? Standard fidelity measures are inadequate: if $g_1\neq g_2$, then infinite copies of $g_1$ and infinite copies of $g_2$ are orthogonal, even if $g_1$ and $g_2$ are arbitrarily close in norm. Thus, identifying an appropriate metric or operational criterion for approximation remains an essential challenge for extending the present theory.

\section{Acknowledgements}
This work was supported by the Novo Nordisk Foundation (grant NNF20OC0059939 ‘Quantum for Life’) and the European Research Council (grant agreement number 101078107 ‘QInteract’). I would like to thank William Slofstra for providing insights to the construction of the *-isomorphism in the paper. I would also like to thank Richard Cleve for proposing the question and Laura Mancinksa for helpful discussions on early ideas of this work. Finally I would like to thank Lauritz van Luijk for pointing out mistakes in the earlier version of this work.
\bibliography{reference}
\bibliographystyle{plain}

\appendix
\section{Proof of Proposition~\ref{prop:equivalence_embezzlement}}
For clarity, we prove the equivalence between standard and no-input embezzlement protocols in the case of 2-qubit target states. The proof can be easily generalized to higher dimensions.

\begin{proof}
\textbf{From Standard to No-Input:}

Assume we are given a standard embezzlement protocol with:
\begin{itemize}
    \item catalyst state $ \ket{\psi_c} \in \mathcal{H} $,
    \item local unitaries $ U_A, U_B: \mathcal{H} \otimes \mathbb{C}^2 \to \mathcal{H} \otimes \mathbb{C}^2 $,
\end{itemize}
such that
\[
(U_A\otimes 1_2) (U_B \otimes 1_2)( \ket{\psi_c} \otimes \ket{00} ) = \ket{\psi_c} \otimes \ket{\psi}.
\]
We will construct $V_A, V_B$ and $\ket{\phi_c}$ that satisfy the no-input embezzlement definition:
\[(V_A\otimes 1_2) V_B \ket{\phi_c} = \ket{\phi_c}\otimes \ket\psi\]
and $(V_A\otimes 1_2)V_B = (V_B\otimes 1_2)V_A$.

We write $\mathbb C^\infty$ for the infinite tensor product Hilbert space 
$\bigotimes_{n=1}^\infty (\mathbb C^2,\ket 0)$ in the sense of 
Guichardet, with reference vector $\ket 0$. Define the following unitaries  
\[W:\mathbb C^\infty \to\mathbb C^2\otimes\mathbb C^\infty\]
where for an orthonormal basis element $\ket e\in\mathbb C^\infty$, $W \ket e = \ket{e_0}\otimes\ket{e_{1+}}$, where $\ket{e_0}$ is the first qubit of $\ket e$ and $\ket{e_{1+}}$ is the rest of the qubits of $\ket e$, so $W$ pulls out the first qubit in $\mathbb C^\infty$. Define 
\[S:\mathbb C^2\otimes\mathbb C^\infty \to\mathbb C^\infty\otimes\mathbb C^2\]
where for any $\ket a\otimes\ket b \in\mathbb C^2 \otimes\mathbb C^\infty$, $S\ket a\otimes \ket b = \ket b\otimes \ket a$ is the swap operator. 

We use the notation $1_{n, X}$ to denote identity of dimension $n$, and $X\in\{A, B, AB\}$ to denote the space on Alice or Bob's side or both sides. We also use the notation $\mathbb C^2_A$, $\mathbb C^2_B$, $\mathbb C^\infty_A$, $\mathbb C^\infty_B$ to distinguish between the registers from Alice and Bob when needed. 

We further argue that $\mathbb C^m_A \otimes \mathbb C^n_B$ is physically equivalent to $\mathbb C^n_B \otimes \mathbb C^m_A$. Mathematically, these two tensor products are not identical, since a swap operator is required to align the register order. Physically, however, they both describe the same non-local system composed of Alice's subsystem $\mathbb C^m_A$ and Bob's subsystem $\mathbb C^n_B$. The choice of ordering across the bipartition is therefore a matter of convention without physical significance. By contrast, $\mathbb C^m_A \otimes \mathbb C^n_A$ is not interchangeable with $\mathbb C^n_A \otimes \mathbb C^m_A$, because both factors belong to Alice's laboratory, and the ordering may correspond to the physical arrangement of qubits. In conclusion, the ordering of registers \emph{across} Alice and Bob's local systems is irrelevant, while the ordering of registers \emph{within} one party's system can be operationally meaningful. We therefore interpret reordering across the bipartition as a null operation and write
\[
   \mathbb C^m_A \otimes \mathbb C^n_B \;\equiv\; \mathbb C^n_B \otimes \mathbb C^m_A.
\]

Define the following swap unitaries
\[S_A = 1_{\mathcal H}\otimes S\otimes 1_{\infty,B},\quad S_B  = 1_\mathcal H\otimes 1_{\infty, A}\otimes S\]
where $S_A:\mathcal H\otimes \mathbb C^2_A\otimes \mathbb C^\infty_A\otimes\mathbb C^\infty_B\to \mathcal H\otimes \mathbb C^\infty_A\otimes \mathbb C^2_A\otimes\mathbb C^\infty_B$ swaps the single-qubit register in Alice's system with the Alice's $\mathbb C^\infty$, and similarly, $S_B: \mathcal H\otimes \mathbb C_A^\infty \otimes \mathbb C^2_B\otimes\mathbb C_B^\infty \to \mathcal H \otimes \mathbb C^\infty_A\otimes \mathbb C^\infty_B\otimes\mathbb C^2_B$.

Define the local pull-out isometries 
\[W_A = 1_{\mathcal H}\otimes W\otimes 1_{\infty, B},\qquad W_B = 1_\mathcal H\otimes 1_{\infty, A}\otimes W,\]
where $W_A:\mathcal H\otimes \mathbb C^\infty_A\otimes\mathbb C^\infty_B\to \mathcal H\otimes \mathbb C^2_A\otimes \mathbb C^\infty_A\otimes \mathbb C^\infty_B$ pulls out a qubit from Alice's $\mathbb C^\infty$, and $W_B:\mathcal H\otimes \mathbb C^\infty_A\otimes \mathbb C^\infty_B\to \mathcal H\otimes \mathbb C^\infty_A\otimes \mathbb C^2_B\otimes \mathbb C^\infty_B$ pulls out a qubit from Bob's $\mathbb C^\infty$.

Define the embzzlement unitaries 
\[\tilde U_A = U_A\otimes 1_{\infty, AB}, \qquad \tilde U_B = U_B\otimes 1_{\infty, AB}\]
where $U_A$ acts on $\mathcal H\otimes\mathbb C^2_A$ in $\mathcal H\otimes \mathbb C_A^2\otimes\mathbb C^\infty_A\otimes\mathbb C^\infty_B$ and $U_B$ acts on $\mathcal H\otimes \mathbb C^2_B$ in $\mathcal H\otimes \mathbb C^\infty_A\otimes\mathbb C^2_B\otimes\mathbb C^\infty_B$.

Finally, define Alice and Bob's local isometries
\[V_A = S_A \tilde U_A W_A,\qquad V_B = S_B \tilde U_B W_B\]
Our catalyst state is 
\[\ket{\phi_c} = \ket{\psi_c}\otimes\ket{00}^\infty\]
We will show that
\[(V_A\otimes 1_{2, B}) V_B \ket{\phi_c} \equiv \ket{\phi_c}\otimes\ket\psi.\]
We first need to analyze how each element of $(V_A\otimes 1_{2, B})$ gets passed through elements of $V_B$.
\begin{eqnarray*}
    (W_A\otimes 1_{2, B} )S_B& =& 1_\mathcal H \otimes W \otimes S = (1_\mathcal H \otimes 1_{2, A}\otimes 1_{\infty, A}\otimes S) (1_\mathcal H\otimes 1_{2, B}\otimes W\otimes 1_{\infty, B}) =: \tilde S_B \tilde W_A\\
    (\tilde U_A\otimes 1_{2, B}) \tilde S_B &=& U_A\otimes 1_{\infty, A}\otimes S = \tilde S_B (U_A\otimes 1_{\infty, A} \otimes 1_{2, B}\otimes 1_{\infty, B}) =: \tilde S_B \hat U_A\\
    \tilde W_A \tilde U_B &=&  U_B\otimes W \otimes 1_{\infty, B}= (U_B\otimes 1_{2, A}\otimes 1_{\infty, AB})\tilde W_A=:\hat U_B \tilde W_A\\
    \tilde W_A W_B &=&1_{\mathcal H}\otimes W\otimes W\\
    \hat U_A \hat U_B &=& (U_A\otimes 1_{\infty, A}\otimes 1_{2, B}\otimes 1_{\infty, B})(U_B\otimes 1_{2, A}\otimes 1_{\infty, AB})\\
    (S_A\otimes 1_{2, B})\hat S_B &=& 1_{\mathcal H}\otimes S\otimes S
\end{eqnarray*} 
Combining all of them together, we have 
\begin{eqnarray*}
    (V_A\otimes 1_{2, B}) V_B \ket{\phi_c} &=& (S_A\otimes 1_{2, B})(\tilde U_A\otimes 1_{2, B})(W_A\otimes 1_{2, B}) S_B \tilde U_B W_B \ket{\psi_c}\otimes\ket{00}^\infty\\
    &=& ((S_A\otimes 1_{2,B})\tilde S_B) (\hat U_A\hat U_B )(\tilde W_A W_B )\ket{\psi_c}\otimes\ket{00}^\infty\\
    &=& ((S_A\otimes 1_{2,B})\tilde S_B) (\hat U_A\hat U_B ) (1_\mathcal H\otimes W\otimes W)\ket{\psi_c}\otimes \ket{00}^\infty\\
    &=&  ((S_A\otimes 1_{2,B})\tilde S_B)\hat U_A\hat U_B\ket{\psi_c}\otimes \ket{0}\otimes\ket 0^\infty\otimes \ket{0}\otimes \ket{0}^\infty\\
    &=& (1_\mathcal H\otimes S\otimes S) (1_\mathcal H\otimes 1_{2,A}\otimes S \otimes 1_{\infty, B})\ket{\psi_c}\otimes \ket\psi\otimes\ket{00}^\infty\\
    &\equiv& \ket{\psi_c}\otimes\ket{00}^\infty\otimes\ket\psi = \ket{\phi_c}\otimes\ket\psi
\end{eqnarray*}
To show that $(V_A\otimes 1_2)V_B = (V_B\otimes 1_2)V_A$, we similarly analyze how each element of $(V_B\otimes 1_{2, A})$ gets passed through elements of $V_A$, and get
\[(V_B\otimes 1_{2, A}) V_A = ((S_B\otimes 1_{2, B})\hat S_A)\hat U_B \hat U_A \tilde W_B W_A = (1_\mathcal H \otimes S\otimes S) \hat U_B\hat U_A (1_\mathcal H\otimes W\otimes W)\]
and since $\hat U_A\hat U_B = \hat U_B\hat U_A$ from the commutivity of $U_A$ and $U_B$, we have $(V_A\otimes 1_2)V_B = (V_B\otimes 1_2)V_A$.
\smallskip

\textbf{From No-Input to Standard:}

Assume we are given:
\begin{itemize}
    \item a catalyst state $ \ket{\phi_c} \in \mathcal{H} $,
    \item local isometries $ V_A, V_B: \mathcal{H} \to \mathcal{H} \otimes \mathbb{C}^2 $ satisfying
    \[
    (V_A \otimes 1_2)(V_B \ket{\phi_c}) = \ket{\phi_c} \otimes \ket{\psi}.
    \]
     as well as $(V_A \otimes 1_2)V_B = (V_B \otimes 1_2)V_A$.
\end{itemize}
Our goal is to construct $U_A, U_B$ and $\ket{\psi_c}$ such that 
\[(U_A\otimes 1_{2, B})(U_B\otimes 1_{2, A}) \ket{\psi_c}\otimes\ket{00} = \ket{\psi_c}\otimes\ket{\psi}\]
satisfying $(U_A\otimes 1_{2, B})(U_B\otimes 1_{2, A}) = (U_B\otimes 1_{2, A})(U_A\otimes 1_{2, B})$.

We start with the defining $\ket{\psi_c} = \ket{\phi_c}\otimes\ket{00}^\infty$. 

We keep the definition of $S$ and re-define $W$ as the following push-in isometry:
\[W:\mathbb C^\infty\otimes \mathbb C^2\to\mathbb C^\infty\]
where $W\ket{e}\otimes \ket{a} = \ket{e_a}$ where $\ket{e_a}$ is the element in $\mathbb C^\infty$ whose first qubit is $\ket a$ and the rest of the qubits are the same as $\ket e$.

Define $W_A$, $W_B$, $S_A$, $S_B$ as before, but with the new definition of $W$. Define $U_A$ and $U_B$ as follows:
\[U_A = S_A (V_A\otimes 1_{\infty, AB}) W_A, \quad U_B = S_B(V_B\otimes 1_{\infty, AB}) W_B\]
Again, we analyze how each element of $(U_A\otimes 1_{2, B})$ gets passed through elements of $U_B\otimes 1_{2, A}$.
\begin{eqnarray*}
  (W_A\otimes 1_{2, B}) (S_B\otimes 1_{2, A})
  &\equiv& 1_\mathcal H\otimes W\otimes S = (1_\mathcal H\otimes 1_{\infty, A}\otimes S)(1_{\mathcal H} \otimes W \otimes 1_{2, B}\otimes 1_{\infty, B}) =: S_B \tilde W_A\\
  \tilde W_A(V_B\otimes 1_{\infty, AB} \otimes 1_{2, A}) &\equiv&
  V_B \otimes W \otimes 1_{\infty, B} = (V_B\otimes 1_{\infty, A}\otimes 1_{\infty, B}) W_A =: \hat U_B  W_A\\
  (V_A\otimes 1_{\infty, AB}\otimes 1_{2, B}) S_B &= & V_A\otimes 1_{\infty, A}\otimes S = (1_\mathcal H \otimes 1_{2, A}\otimes 1_{\infty, A}\otimes S) (V_A\otimes 1_{\infty, A}\otimes 1_{2, B}\otimes 1_{\infty, B}) =: \tilde S_B \hat U_A\\
  (S_A\otimes 1_{2, B}) \tilde S_B &=& 1_\mathcal H\otimes S\otimes S\\
  \tilde W_A (W_B\otimes 1_{2, A}) &\equiv& 1_\mathcal H\otimes W\otimes W
\end{eqnarray*}
We note the use of $\equiv$ involving $X\otimes 1_{2, A}$ because while we used $U_B\otimes 1_{2, A}\in \mathcal U(\mathcal H\otimes \mathbb C^\infty_{AB}\otimes \mathbb C^2_A)$, we interpret it as an unitary acting on $\mathcal H\otimes \mathbb C^\infty_A\otimes \mathbb C^2_A\otimes \mathbb C^\infty_B$.

Then  we have the commutation relation
\begin{eqnarray*}
  (U_A\otimes 1_{2, B})(U_B\otimes 1_{2, A}) &\equiv& (1_\mathcal H\otimes S\otimes S)
(V_A\otimes 1_{\infty, A}\otimes 1_{2, B}\otimes 1_{\infty, B})(V_B\otimes 1_{\infty, AB})(1_\mathcal H\otimes W\otimes W)\\
&=& (1_\mathcal H\otimes S\otimes S)
(V_B\otimes 1_{2, A}\otimes 1_{\infty, A}\otimes 1_{\infty, B})(V_A\otimes 1_{\infty, AB})(1_\mathcal H\otimes W\otimes W)\\
&\equiv& (U_B\otimes 1_{2, A})(U_A\otimes 1_{2, B})
\end{eqnarray*}
Now to analyze the overall action of $(U_A\otimes 1_{2, B})(U_B\otimes 1_{2, A})$ on $\ket{\psi_c}\otimes\ket{00}$, we have
\[\ket{\psi_c}\otimes\ket{00} = \ket{\phi_c}\otimes\ket{00}^\infty\otimes\ket{00} \equiv \ket{\phi_c}\otimes\ket{0}^\infty_A\otimes\ket{0}_A\otimes\ket 0^\infty_B\otimes\ket 0_B\]
which allows us to compute
\begin{eqnarray*}
  (I\otimes W\otimes W ) \ket{\phi_c}\otimes \ket 0^\infty_A\otimes \ket 0_A\otimes \ket 0^\infty_B\otimes \ket 0_B&=& \ket{\phi_c}\otimes\ket{00}^\infty\\
  (V_A\otimes 1_{\infty, A}\otimes 1_{2, B}\otimes 1_{\infty, B})(V_B\otimes 1_{\infty, AB}) \ket{\psi_c}\otimes\ket{00}^\infty &=& \ket{\psi_c}\otimes \ket{\psi}\otimes\ket{00}^\infty \\
  &\equiv& (1_{\mathcal H}\otimes 1_{2,A}\otimes S \otimes 1_{\infty, B})\ket{\psi_c}\otimes \ket{\psi}\otimes\ket{00}^\infty\\
  (I\otimes S\otimes S)  (1_{\mathcal H}\otimes 1_{2,A}\otimes S \otimes 1_{\infty, B})\ket{\phi_c}\otimes \ket\psi\otimes \ket{00}^\infty &\equiv & \ket{\phi_c}\otimes\ket{00}^\infty\otimes \ket{\psi} = \ket{\psi_c}\otimes\ket\phi
\end{eqnarray*}
\end{proof}

\end{document}